\newenvironment{proof}{\paragraph{Proof:}}{\hfill$\square$}
\newtheorem{theorem}{Theorem}[section]
\newtheorem{definition}{Definition}[section]
\newtheorem{lemma}[theorem]{Lemma}
\newtheorem{result}[theorem]{Result}
\newtheorem{corollary}[theorem]{Corollary}
\newcommand{\ba}{\begin{array}}
\newcommand{\ea}{\end{array}}
\newcommand{\bfl}{\begin{flushleft}}
\newcommand{\efl}{\end{flushleft}}
\newcommand{\bfr}{\begin{flushright}}
\newcommand{\efr}{\end{flushright}}
\newcommand{\bt}{\begin{theorem}}
\newcommand{\bd}{\begin{definition}}
\newcommand{\ed}{\end{definition}}
\newcommand{\et}{\end{theorem}}
\newcommand{\bl}{\begin{lemma}}
\newcommand{\el}{\end{lemma}}
\newcommand{\ee}{\end{exam}}
\newcommand{\bcor}{\begin{corollary}}
\newcommand{\ecor}{\end{corollary}}
\begin{document}
\begin{frontmatter}
\title{Bayesian reliability acceptance sampling plans under adaptive simple step stress partial accelerated life test}
\author[label1]{Rathin Das \corref{cor1}}
\address[label1]{ 
Statistical Quality Control and Operations Research Unit, Indian Statistical Institute, 203, B. T. Road, Kolkata, PIN- 700108, India
}
%\address[label2]{Address Two\fnref{label4}}

\cortext[cor1]{Corresponding author}

\ead{rathindas65@gmail.com}
%\ead[url]{author-one-homepage.com}

\author[label1]{Biswabrata Pradhan}
%\address[label5]{Some University}

\begin{abstract}
    
{
%This work considers Bayesian reliability acceptance sampling plan under a simple step-stress partial accelerated life test (SSSPALT). In SSSPALT, the items are put on normal operating conditions up to a certain time after that the stress is increased to get the failure time information early. When the stress increases, an additional cost is incorporated. So, we use an adaptive scenario where the stress is increased after a certain time depending on the number of failures up to that point. The cumulative exposure model (CEM) is used to link the lifetime distribution through SSSPALT. Such a Bayesian sampling plan (BSP) through an adaptive SSSPALT procedure is called BSPAA. The BSPAA with type-I censoring in the general loss function is derived. Given joint gamma and uniform prior distributions and quadratic loss function, an explicit form of the Bayes decision function and Bayes risk is derived. Comparisons between the proposed BSPAA and the conventional non-accelerated BSP (CBSP) and conventional accelerated BSP (CBSPA) are carried out.
In the traditional simple step-stress partial accelerated life test (SSSPALT), the items are put on normal operating conditions up to a certain time and after that the stress is increased to get the failure time information early. However, when the stress increases, an additional cost is incorporated that increases the cost of the life test. In this context, an adaptive SSSPALT is considered where the stress is increased after a certain time if the number of failures up to that point is less than a pre-specified number of failures.  We consider determination of Bayesian reliability acceptance sampling plans (BSP) through adaptive SSSALT conducted under Type I censoring. The BSP under adaptive SSSPALT is called BSPAA. The Bayes decision function and Bayes risk are obtained for the general loss function. Optimal BSPAAs are obtained for the quadratic loss function by minimizing Bayes risk. An algorithm is provided for computation of optimum BSPAA. Comparisons between the proposed BSPAA and the conventional BSP through non-accelerated life test (CBSP) and conventional BSP through SSSPALT (CBSPA) are carried out.}
\end{abstract}
\begin{keyword}
  {Reliability \sep Bayes decision function \sep Bayes risk \sep loss function\sep exponential distribution }  
\end{keyword}
\end{frontmatter}
\section*{Abbreviation \& Notation}
    \begin{longtable}{ll}
   SSSPALT& simple
step-stress partial accelerated life test\\
CEM & cumulative exposure model\\ 
RASP & reliability acceptance sampling plan\\
BSP& Bayesian reliability acceptance sampling plan\\
CBSP & conventional Bayesian sampling plan through non-accelerated life test\\
CBSPA& conventional Bayesian sampling plan through SSSPALT\\
BSPAA&  Bayesian sampling plan through adaptive SSSPALT\\
RRS& relative risk saving\\
$RRS_1$ &RRS of BSPAA over CBSP\\
$RRS_2$ &RRS of BSPAA over CBSPA\\
$n$ & sample size\\
$t_1,t_2$& time points\\
$m$& threshold value of number of failure up to $t_1$\\
$\boldsymbol{q}=(n,t_1,t_2,m)$& vector of sampling parameters\\
$E[D]$& expected number of failures under adaptive SSSALT\\
$E[\tau]$& expected time duration under adaptive SSSALT\\
$(n_B,t_{1B},t_{2B},m_B)$& optimal sampling parameter of BSPAA\\
$R_B$& Bayes risk of BSPAA\\
$E[D_B]$& expected number of failures for optimal sampling plan of BSPAA\\
$E[\tau_B]$& expected test duration for optimal sampling plan of BSPAA\\
$(n^*,t_1^*)$& optimal sampling parameter of CBSP\\
$R_1$& Bayes risk of CBSP\\
$E[D^*]$& expected number of failures for optimal sampling plan of CBSP\\
$E[\tau^*]$& expected test duration for optimal sampling plan of CBSP\\
$(n_A,t_{1A},t_{2A})$& optimal sampling parameter of CBSPA\\
$R_2$& Bayes risk of CBSPA\\
$E[D_A]$& expected number of failures for optimal sampling plan of CBSPA\\
$E[\tau_A]$& expected test duration for optimal sampling plan of CBSPA
\end{longtable}
\section{Introduction}

Reliability acceptance sampling plan (RASP) encompasses with acceptance or rejection of a lot based on reliability of the product under consideration. In RASP, \ a sample is taken from the lot for life testing and the decision of accepting or rejecting the lot is taken based on a suitable statistic obtained from lifetime data. Determination of RASP is an important task in reliability studies. There are various methods for selecting an optimal RASP. \ For example, producer's and consumer's risk point schemes, defense sampling schemes, Dodge and Roming's plan and decision-theoretic plans. Among these decision-theoretic approach is a more scientific and reasonable method from an economic point of view. This approach is widely used because it is decided upon by making the best choice based on some economic considerations, such as maximizing the return or minimizing the loss. 

Life tests are conducted to assess the reliability of the product. Normally, censored life tests are conducted because of time, cost and other resource constraints. Determination of RASP based on censored data is an important issue in reliability studies.
Type-I, type-II and hybrid censoring schemes are the most common types of censoring schemes under which life tests are conducted. In type-I censoring, the life test is terminated at a predetermined time $T_0$ and in type-II censoring scheme, the life test is terminated after a fixed number of failures $(r\leq n)$. Hybrid censoring scheme is a combination of type-I and type-II censoring schemes. There have been a number of works on the determination of RASPs under different censoring schemes by the Bayesian approach. For example, Yeh \cite{yeh1990optimal,yeh1994bayesian} obtained RASP under type II censoring and type I censoring, respectively. Yeh and Choy \cite{yeh1995bayesian} considered under random censoring and, Chen et al. \cite{chen2007bayesian}, Lin et al. \cite{lin2008exact} and 
Prajapati et al. \cite{prajapati2019new} obtained BSP under hybrid censoring. However, in these works, the optimal decision functions are not considered for finding BSPs. Lin \cite{lin2002bayesian} first introduced the Bayes decision function, which is the optimal decision function among all decision functions for the determination of BSP under type-I censoring. Later, for hybrid censored data, the Bayes decision approach was considered by Liang and Yang \cite{liang2013optimal} for the determination of BSP. 

In these works, the life test is conducted in normal operating conditions. However, in real-life scenarios, many products are highly reliable, the mean time to failure under normal operating conditions is very large. Also, under time constraints, it may not be possible to obtain enough lifetime information to take a decision with the use of conventional life testing experiments. In that situation, an accelerated life test (ALT) or a partial accelerated life test (PALT) are used to obtain enough lifetime information in a shorter period of time. %However, the whole observations in ALTs are not the real observations of the product. we need to extrapolate the accelerated data to the normal conditions using some mathematical relation between the accelerated data and normal condition data.
 In ALTs, the items are put on a life test only on high-stress levels but in PALTs, the items are put on both normal and high-stress levels. The problem of designing BSP based on ALT or PALT with a censored sample has received less attention. Recently, Chen et al. \cite{chen2022designing, chen2023designing} and Prajapati and Kundu \cite{prajapati2023bayesian} have studied the designing BSP for a simple step-stress test based on type-II censored data. The step-stress test is one of the important life tests in ALT or PALT.  Under step-stress PALT (SSPALT), $n$  items are put on a life test at initial stress $s_0$ and then the stress is increased to $s_1<s_2<\cdots<s_k$ at pre-specified times $t_1<t_2<\cdots<t_k$ respectively. A simple step-stress PALT (SSSPALT) is a special case of SSPALT, when $k=1$. This means that after time $t_1$ the stress level $s_0$ is increased to $s_1$.  

In designing BSP under SSSPALT which has been studied by Chen et al. \cite{chen2022designing, chen2023designing}, it is seen that when stress increases, an additional cost is incorporated. In view of this, we consider an adaptive SSSPALT where changing the stress level after time $t_1$ depends on the number of failures up to time $t_1$. The adaptive test is described as follows. If the number of failures up to time $t_1$ is less than a pre-specified number $m$, then the stress $s_0$ changes to $s_1$. Otherwise, the test continues till $t_2$ with the stress $s_0$. If $m=0$, then the stress is unchanged after $t_1$ irrespective of the number of failures and the test is continued up to $t_2$ under initial stress $s_0$. In this case the life test becomes a conventional non-accelerated life test under type-I censoring. If $m=n$, then the stress is always changed to the stress level $s_1$ after $t_1$ irrespective of the number of failures till $t_1$ and the test is continued up to $t_2$ under the stress level $s_1$. In this case, the life test becomes a conventional accelerated life test under type-I censoring. The advantage of an adaptive test is that we can study conventional accelerated life test and non-accelerated life test together. Also, when an adaptive scenario occurs i.e., $0<m<n$, the cost of the life test is minimum than the non-accelerated and accelerated life tests.  Xiang et al. \cite{xiang2017designing} studied designing accelerated life tests under an adaptive scenario. 
However, there is no work on the determination of BSP under adaptive scenarios.  Also, to the best of our knowledge, there is no work on designing BSP based on SSSPALT under type-I censoring. Therefore, in this work, we obtain BSP through adaptive SSSALT and conventional SSSALT under type-I censoring.  
%There is a lot of work on inference under SSSALT for different censoring schemes.  For example, Balakrishnan et al \cite{balakrishnan2009exact} studied for type-I censoring, Balakrishnan and Xie \cite{balakrishnan2007exact1,balakrishnan2007exact2} studied for type-I hybrid and type-II hybrid respectively. Kateri and Balakrishnan \cite{kateri2008inference} studied type-II censored data for the Weibull distribution, and Balakrishnan et al. \cite{balakrishnan2009inference} studied type-II censored data for the lognormal distribution. In Bayesian analysis, DeGroot and Goel \cite{degroot1979bayesian} introduced the Bayesian estimation and optimal design in partial ALT. Amin et al. \cite{ramzan2022bayesian} studied the Bayesian analysis for type-I modified Weibull data.

The paper is organized as follows. The framework of the BSPAA is described for the exponential distribution in Section \ref{model}. The Bayes decision function and Bayes risk are derived for the general loss function and prior distributions in Section \ref{bayesrisk}. The Bayes decision and Bayes risk are derived for the quadratic loss function in Section \ref{de}. We consider determination of optimal BSPAA in Section \ref{optim}. Optimum BSPAAs are computed under different scenarios and effect of the parameters on optimum BSPAA is studied in Section \ref{com}. This Section also describes a comparison among BSPAA and CBSP and CBSPA.  A data set is
analyzed to demonstrate the proposed model in Section \ref{nu}. The conclusion and future work are mentioned in Section \ref{con}. 

\section{Model and Assumptions}\label{model}

Suppose $n$ identical items are selected from the lot and put on life test in normal stress $s_0$ at time $t_0=0$. Let $t_1>0$ and $t_2>t_1$ denote the pre-fixed time points. Let $D_1$ be the number of failures that occur by time $t_1$  under the stress level $s_0$. If $D_1$ is less than a pre-assigned number $m ~(\leq n)$, the stress level $s_0$ is changed to a higher stress level $s_1$ and the life test continues up to time $t_2$. Otherwise, the stress level remains unchanged and the life test continues up to time $t_2$ with stress $s_0$. For simplicity, stress levels $s_0$ and $s_1$ are represented as follows.
\begin{align*}
  s=\begin{cases}
        s_0& \text{if } ~ 0<t<t_1\\
        s_{0}& \text{if } ~ t\geq t_1, ~D_1\geq m\\
       s_1&\text{if } ~ t\geq t_1, ~D_1<m.
    \end{cases}
\end{align*}
 We assume that the hazard rate at the stress level $s_i$ is $\lambda_{i}$, $i=0,1$. Let $Y$ denote the lifetime of an item with CDF $F$ and probability density function (PDF) $f$. From the cumulative exposure model (CEM) (for details, see Kundu \& Ganguly \cite{kundu2017analysis} and Nelson \cite{nelson2009accelerated}), the CDF of $Y$ is given by
\begin{align*}
    F(t)=\begin{dcases*}
        1-\exp(-\lambda_0 t)& \text{if } $t<t_1$\\
        [1-\exp(-\lambda_0 t)]^{1-\delta}[1-\exp[-\lambda_{0} t_1-\lambda_{1} (t-t_1)]]^\delta,&\text{if } $t\geq t_1$,
    \end{dcases*}
\end{align*}
where ${\delta}$ is the indicator function defined as
\begin{align}
    {\delta}=\begin{cases}
        1 & \text{if } D_1<m\\
        0 & \text{if } D_1\geq m.
    \end{cases}
\end{align}
The PDF of $Y$ is given by
\begin{align*}
    f(t)=\begin{dcases*}
        \lambda_0\exp(-\lambda_0 t)& \text{if } $t<t_1$\\
        [\lambda_0\exp(-\lambda_0 t)]^{1-\delta}[\lambda_1\exp[-\lambda_{0} t_1-\lambda_{1} (t-t_1)]]^\delta,&\text{if } $t\geq t_1$.
    \end{dcases*}
\end{align*}

Suppose $Y_1,Y_2,\ldots,Y_n$ be the lifetimes of $n$ items with PDF $f$ and $Z_1,Z_2,\ldots,Z_n$ be the order statistics corresponding to $Y_1,Y_2,\ldots,Y_n$. Let $D_2$ be the random variable denoting the number of failures in the interval $(t_1,t_2]$. The data corresponding to adaptive SSPALT is then given by $(\mathbf{Z},D_1,D_2)$, where $\mathbf{Z}=(Z_1,\ldots,Z_D)$ and $D=D_1+D_2$. The observed data is given by $(\mathbf{z},d_1,d_2)$.
Let $\lambda_0=\lambda$ and $\lambda_1=\phi\lambda$, where $\phi>1$ as the first stress level $s_0$ is less than the second stress level $s_1$. Let $\boldsymbol{\theta}=(\lambda,\phi)$ denote vector of parameters of the lifetime distribution. The likelihood function can be written as
\begin{align*}
    L( \boldsymbol{\theta}\ | \ (\boldsymbol{z},d_1,d_2))=
       \frac{n!}{(n-d)!}\phi^{\delta d_2}\lambda^d \exp\left[-\lambda\left(\sum_{j=1}^{d_1}z_j+(n-d_1)t_1+\phi^{\delta}\left(\sum_{j=d_1+1}^d(z_j-t_1)+(n-d)(t_2-t_1)\right)\right)\right].
\end{align*}
Let $w_1(\mathbf{z},d_1,d_2)=\sum_{j=1}^{d_1}z_j+(n-d_1)t_1$ and $w_2(\mathbf{z},d_1,d_2)=\sum_{j=d_1+1}^d(z_j-t_1)+(n-d)(t_2-t_1)$, then the likelihood becomes
\begin{align} \label{e1} L( \boldsymbol{\theta}\ | \ (w_1(\mathbf{z},d_1,d_2),w_2(\mathbf{z},d_1,d_2),d_1,d_2))\propto\lambda^d\phi^{d_2{\delta}}\exp\left[-\lambda \left(w_1(\mathbf{z},d_1,d_2)+\phi^{\delta} w_2(\mathbf{z},d_1,d_2)\right)\right].
\end{align}
Further, suppose $w_1 = $ $w_1(\mathbf{z},d_1,d_2)$ and $w_2 = w_2(\mathbf{z},d_1,d_2)$ and, denote $ \boldsymbol{x} = (w_1,w_2,d_1,d_2)$ as the observed data. The vector of the decision variable of the life testing plan under this setup is denoted by
$\boldsymbol{q} = (n, t_1, t_2, m)$. 

For developing the BSP, first, we assume the prior on parameters $\boldsymbol{\theta}=(\lambda,\phi)$ and then define the loss function. Let $p(\boldsymbol{\theta})$ be the prior distribution of $\boldsymbol{\theta}$. It is assumed that $\lambda$ and $\phi$ are independent and $p_1(.)$ and $p_2(.)$ be their PDFs, respectively. Therefore, $p(\boldsymbol{\theta})$ can be written as $p(\boldsymbol{\theta})=p_1(\lambda)p_2(\phi)$.
The posterior distribution of $\boldsymbol{\theta}$ given $\boldsymbol{x}$ is
\begin{align}\label{post}
    p(\boldsymbol{\theta} \ | \ \boldsymbol{x})=\frac{L(\boldsymbol{\theta}\ | \ \boldsymbol{x})p(\boldsymbol{\theta})}{p(\boldsymbol{x})},
\end{align}
where $p(\boldsymbol{x})=\int_{\boldsymbol{\theta}}L(\boldsymbol{\theta}\ | \ \boldsymbol{x})p(\boldsymbol{\theta}) ~d\boldsymbol{\theta}$.\\ Let $a(\boldsymbol{x}\ | \ \boldsymbol{q})$ denotes the action of acceptance sampling and it is defined by
\begin{align*}
    a(\boldsymbol{x}\ | \ \boldsymbol{q})=\begin{cases}
        1 & \text{if the lot is accepted for based on observed data $\boldsymbol{x}$}\\
        0 & \text{if the lot is rejected based on observed data $\boldsymbol{x}$}.
    \end{cases}
\end{align*}

It is assumed that the items are used by the consumer in normal operating conditions. If an item fail before the lifetime $L$, the loss due to failure is $h(\lambda)=CP(X<L)=C[1-\exp(-\lambda L)]$, where $C$ is the cost of accepting an item. If the hazard rate $\lambda$ increases, the chance of items failing before the lifetime $L$ increases. Therefore $h(\lambda)$ is a positive and increasing function with $\lambda$.
 In the paper, the order-restricted PALT is taken. Therefore, from the life-testing experiment, we obtain information about $\lambda$ and the accelerating factor $\phi$. Since $\lambda$ does not depend on $\phi$, $h(\lambda)$ does not depend on $\phi$. 
If the decision is rejection, the products are discarded or returned. Therefore, the rejection cost is fixed and denoted by $C_r$. 
We consider the loss functions for acceptance and rejection after the life testing as
\begin{align}\label{acc}
\mathcal{L}\left(a(\boldsymbol{x}\ | \ \boldsymbol{q})\ | \  \boldsymbol{\theta}\right)=
\begin{dcases*}
    h(\lambda)+nC_s-(n-d)v_s+\tau C_{t}+{\delta} (n-d_1)C_a & if $a(\boldsymbol{x}\ | \ \boldsymbol{q})\ | \  \boldsymbol{\theta})=1$\\
    C_r+nC_s-(n-d)v_s+\tau C_{t}+{\delta} (n-d_1)C_a,& if $a(\boldsymbol{x}\ | \ \boldsymbol{q})\ | \  \boldsymbol{\theta})=0$,
\end{dcases*} 
\end{align}
 where $C_s$ is the cost per item on life testing; $C_a$ is the additional cost per item for increasing the stress level from $s_0$ to $s_1$ if $D_1<m$; $v_s(<C_s)$ is the salvage value per item for the survived item after life testing; $C_t$ is the cost per unit time. \\

\section{Bayes Risk and Bayes decision function}\label{bayesrisk}
Here, we obtain the Bayes Risk using the loss function given in equation (\ref{acc}). Then we determine the life testing plan $\boldsymbol{q}_B$ and the optimal decision function $a_B(\boldsymbol{x}\ | \boldsymbol{q}_B)\equiv a_B$. We obtain $(\boldsymbol{q}_B,a_B)$ by minimizing the Bayes risk over all such sampling plans. 
\subsection{Bayes risk}
 The loss function in (\ref{acc}) $L(a(\boldsymbol{x}\ | \ \boldsymbol{q})\ | \ \boldsymbol{\theta})$ can be written as 
\begin{align*}
   \mathcal{L}\left(a(\boldsymbol{x}\ | \ \boldsymbol{q})\ | \  \boldsymbol{\theta}\right)=&\left[h(\lambda)+nC_s-(n-d)v_s+\tau C_{t}+{\delta} (n-d_1)C_a \right]a(\boldsymbol{x}\ | \ \boldsymbol{q})\\
   &+\left[C_r+nC_s-(n-d)v_s+\tau C_{t}+{\delta} (n-d_1)C_a\right]\left[1-a(\boldsymbol{x}\ | \ \boldsymbol{q})\right]\\
   =&a(\boldsymbol{x}\ | \ \boldsymbol{q})h(\lambda)+(1-a(\boldsymbol{x}\ | \ \boldsymbol{q}))C_r+nC_s+{\delta}(n-d_1)C_a-(n-d)v_s+\tau C_t.
\end{align*}
 The Bayes risk of a sampling plan $\boldsymbol{q}=(n,t_1,t_2,m)$ is obtained as 
\begin{align}\label{R}
R_B(\boldsymbol{q},a)&=E_{\boldsymbol{\theta}}E_{\boldsymbol{x}\ | \ \boldsymbol{\theta}}\left[\mathcal{L}\left(a(\boldsymbol{x}\ | \ \boldsymbol{q})\ | \  \boldsymbol{\theta}\right)\right]\nonumber\\
&=n(C_s-v_s)+C_a E_{\boldsymbol{\theta}}[(n-D_1)\ | \ D_1<m]+E_{\boldsymbol{\theta}}[D\ | \ \boldsymbol{\theta}]+C_t E_{\boldsymbol{\theta}}[\tau\ | \ \boldsymbol{\theta}]+R_1(\boldsymbol{q},a)\nonumber\\
&=n(C_s-v_s)+C_a n_{as}+E[D]+C_t E[\tau]+R_1(\boldsymbol{q},a)
\end{align}
where
\begin{align}\label{R1}
R_1(\boldsymbol{q},a)&=E_{\boldsymbol{\theta}}E_{\boldsymbol{x}\ | \ \boldsymbol{\theta}} [a(\boldsymbol{x}\ | \ \boldsymbol{q})h(\lambda)+(1-a(\boldsymbol{x}\ | \ \boldsymbol{q}))C_r],
\end{align}
$n_{as}= E_{\boldsymbol{\theta}}[(n-D_1)\ | \ D_1<m]$, $E[D]=E_{\boldsymbol{\theta}}[D\ | \ \boldsymbol{\theta}]$ and $E[\tau]=E_{\boldsymbol{\theta}}[\tau \ | \ \boldsymbol{\theta}]$.
The expressions of $E[D]$, $E[\tau]$ and $n_{as}$ are provided in \ref{appB}.

\subsection{Bayes decision function \texorpdfstring{${a}_B(.\ | \ {\boldsymbol{q}})$}{}}\label{bayesa}
 We obtain the Bayes decision function that minimizes the Bayes risk $R_B(\boldsymbol{q},a)$ among the class of all decision functions. The Bayes risk $R_B(\boldsymbol{q},a)$ in equation (\ref{R}) consists of the decision function $a(\cdot\ | \ \boldsymbol{q})$ only in the term $R_1(\boldsymbol{q},a)$. Note that
$R_1(\boldsymbol{q},a)$ in equation (\ref{R1}) can be written as
\begin{align*}
   R_1(\boldsymbol{q},a)  =E_{\boldsymbol{\theta}}[h(\lambda)]+E_{\boldsymbol{\theta}}E_{\boldsymbol{x}\ | \ \boldsymbol{\theta}} [(1-a(\boldsymbol{x}\ | \ \boldsymbol{q}))(C_r-h(\lambda))]\\
   =E_{\boldsymbol{\theta}}[h(\lambda)]+E_{\boldsymbol{x}}E_{\boldsymbol{\theta}\ | \ \boldsymbol{x}} [(1-a(\boldsymbol{x}\ | \ \boldsymbol{q}))(C_r-h(\lambda))].
\end{align*}
we have
\small\begin{align*}
&E_{\boldsymbol{x}}E_{\boldsymbol{\theta}\ | \ \boldsymbol{x}} [(1-a(\boldsymbol{x}\ | \ \boldsymbol{q}))(C_r-h(\lambda))]\\
=&\sum_{d_1=0}^n\sum_{d_2=0}^{n-d_1}\int_{w_1}\int_{w_2}[1-a(w_1,w_2,d_1,d_2\ | \ \boldsymbol{q})]E_{\boldsymbol{\theta}\ | \ w_1,w_2,d_1,d_2}[C_r-h(\lambda)]~g_{(W_1,W_2,D_1,D_2)}(w_1,w_2,d_1,d_2\ | \ \boldsymbol{\theta})~dw_1~dw_2\\
=&\sum_{d_1=0}^n\sum_{d_2=0}^{n-d_1}\int_{w_1}\int_{w_2}[1-a(w_1,w_2,d_1,d_2\ | \ \boldsymbol{q})]\left\{C_r-\int_{\boldsymbol{\theta}}h(\lambda)~p( \boldsymbol{\theta}\ | \  w_1,w_2,d_1,d_2) ~d\boldsymbol{\theta}\right\}~g_{(W_1,W_2,D_1,D_2)}(w_1,w_2,d_1,d_2\ | \ \boldsymbol{\theta})~dw_1~dw_2\\
=&\sum_{d_1=0}^n\sum_{d_2=0}^{n-d_1}\int_{w_1}\int_{w_2}[1-a(w_1,w_2,d_1,d_2\ | \ \boldsymbol{q})][C_r-\varphi(w_1,w_2,d_1,d_2)]~g_{(W_1,W_2,D_1,D_2)}(w_1,w_2,d_1,d_2\ | \ \boldsymbol{\theta})~dw_1~dw_2,
\end{align*}
\normalsize
where $g_{(W_1,W_2,D_1,D_2)}$ is the joint distribution of $(W_1,W_2,D_1,D_2)$ and 
\begin{align*}
\varphi(w_1,w_2,d_1,d_2)=\int_{\boldsymbol{\theta}}h(\lambda)~p( \boldsymbol{\theta}\ | \  w_1,w_2,d_1,d_2) ~d\boldsymbol{\theta}.
\end{align*}
To obtain Bayes decision, we need to minimize $R_1(\boldsymbol{q},a)$ with respect to $a$ which is equivalent to minimization of $E_{\boldsymbol{\theta}}E_{\boldsymbol{x}\ | \ \boldsymbol{\theta}} [(1-a(\boldsymbol{x}\ | \ \boldsymbol{q}))(C_r-h(\lambda))]$ with respect to $a$.
Now we consider two cases:\\
\textbf{Case 1: } $\varphi(w_1,w_2,d_1,d_2)\leq C_r$:\\
If $a(w_1,w_2,d_1,d_2\ | \ \boldsymbol{q})=1$, then $ E_{(W_1,W_2,D_1,D_2)}E_{ \boldsymbol{\theta}\ | \ (w_1,w_2,d_1,d_2)} [(1-a(\boldsymbol{x}\ | \ \boldsymbol{q}))(C_r-h(\lambda))]=0$ and if $a(w_1,w_2,d_1,d_2\ | \ \boldsymbol{q})=0$, then $ E_{(W_1,W_2,D_1,D_2)}E_{ \boldsymbol{\theta}\ | \ (w_1,w_2,d_1,d_2)} [(1-a(\boldsymbol{x}\ | \ \boldsymbol{q}))(C_r-h(\lambda))]\geq0$\\
\textbf{Case 2: }$\varphi(w_1,w_2,d_1,d_2)> C_r$:\\
If $a(w_1,w_2,d_1,d_2\ | \ \boldsymbol{q})=1$, then $ E_{(W_1,W_2,D_1,D_2)}E_{ \boldsymbol{\theta}\ | \ (w_1,w_2,d_1,d_2)} [(1-a(\boldsymbol{x}\ | \ \boldsymbol{q}))(C_r-h(\lambda))]=0$ and if $a(w_1,w_2,d_1,d_2\ | \ \boldsymbol{q})=0$, then $ E_{(W_1,W_2,D_1,D_2)}E_{ \boldsymbol{\theta}\ | \ (w_1,w_2,d_1,d_2)} [(1-a(\boldsymbol{x}\ | \ \boldsymbol{q}))(C_r-h(\lambda))]\leq0$\\
Therefore, for each fixed $\boldsymbol{q}$, if we take $ C_r- \varphi(w_1,w_2,d_1,d_2)\geq 0$ when $a(w_1,w_2,d_1,d_2\ | \ \boldsymbol{q})=1$ and $C_r- \varphi(w_1,w_2,d_1,d_2)$ $\leq 0$ when $a(w_1,w_2,d_1,d_2\ | \ \boldsymbol{q})=0$,
then $E_{\boldsymbol{\theta}}E_{\boldsymbol{x}\ | \ \boldsymbol{\theta}} [(1-a(\boldsymbol{x}\ | \ \boldsymbol{q}))(C_r-h(\lambda))]$ is minimized with respect to $a(\cdot\ | \ \boldsymbol{q})$.
So for fixed $\boldsymbol{q}$, the Bayes decision function $a_B(\boldsymbol{x}\ | \ \boldsymbol{q})$ is given by,
\begin{align*}
    a_B(w_1,w_2,d_1,d_2\ | \ \boldsymbol{q})=\begin{cases}
        1 &\text{ if } C_r-\varphi(w_1,w_2,d_1,d_2)\geq 0\\
        0 &\text{ otherwise}.
    \end{cases}
\end{align*}
%\subsection{Alternative form of Bayes decision function and Bayes risk}\label{alter}
Next, we provide an alternative form of the Bayes decision function that is useful to obtain a simplified form of the term $R_1(\boldsymbol{q},a_B)$ in the Bayes risk. 
\subsection{Alternative form of Bayes decision function}\label{de1}
For developing an alternative form of the Bayes decision function, we consider the following theorem.
\begin{theorem}\label{the} If $h(\lambda)$  is increasing in $\lambda$, then the posterior expectation of $h(\lambda)$ which is given by $\varphi(w_1,w_2,d_1,d_2)$, satisfies the following monotonicity properties:
    \begin{enumerate}[(i)]
\item For fixed $(w_2,d_1,d_2)$, $\varphi(w_1,w_2,d_1,d_2)$ is decreasing  in $w_1$.
\item For fixed $(w_1,d_1,d_2)$, $\varphi(w_1,w_2,d_1,d_2)$ is decreasing in $w_2$.
    \end{enumerate}
\end{theorem}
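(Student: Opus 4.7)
The plan is to differentiate $\varphi$ with respect to each of $w_1$ and $w_2$, express the resulting derivative as a negative posterior covariance, and then sign it by a Chebyshev-type correlation inequality. From (\ref{e1}) we have $L(\boldsymbol{\theta}\mid\boldsymbol{x})\propto\lambda^{d}\phi^{d_2\delta}\exp[-\lambda(w_1+\phi^{\delta}w_2)]$, so $\partial L/\partial w_1=-\lambda L$ and $\partial L/\partial w_2=-\lambda\phi^{\delta}L$. Applying the quotient rule to the ratio $\varphi = \int h(\lambda)\,L\,p\,d\boldsymbol{\theta} \,/\, \int L\,p\,d\boldsymbol{\theta}$ gives, after the obvious cancellation,
\[
\frac{\partial\varphi}{\partial w_1}=-\mathrm{Cov}_{\boldsymbol{\theta}\mid\boldsymbol{x}}\!\bigl(h(\lambda),\,\lambda\bigr),\qquad
\frac{\partial\varphi}{\partial w_2}=-\mathrm{Cov}_{\boldsymbol{\theta}\mid\boldsymbol{x}}\!\bigl(h(\lambda),\,\lambda\phi^{\delta}\bigr),
\]
and it suffices to show that each of these posterior covariances is non-negative.

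For (i), both $h(\lambda)$ and $\lambda$ are functions of $\lambda$ alone, so iterated expectations collapse the joint covariance to the univariate covariance $\mathrm{Cov}_{\lambda\mid\boldsymbol{x}}(h(\lambda),\lambda)$ under the marginal posterior of $\lambda$. Since both arguments are non-decreasing functions of the same variable, Chebyshev's sum (correlation) inequality yields non-negativity, and hence $\partial\varphi/\partial w_1\le 0$. The same argument handles (ii) when $\delta=0$, because then $\phi^{\delta}=1$ and the situation is identical.

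The non-trivial case is (ii) with $\delta=1$, where we need $\mathrm{Cov}(h(\lambda),\lambda\phi)\ge 0$. Again conditioning on $\lambda$ and using iterated expectations gives
\[
\mathrm{Cov}\!\bigl(h(\lambda),\lambda\phi\bigr)=\mathrm{Cov}_{\lambda\mid\boldsymbol{x}}\!\bigl(h(\lambda),\,\lambda\,E[\phi\mid\lambda,\boldsymbol{x}]\bigr),
\]
so a second Chebyshev application reduces the claim to showing that $\lambda\mapsto\lambda\,E[\phi\mid\lambda,\boldsymbol{x}]$ is non-decreasing in $\lambda$. This is the main obstacle. Observe that the conditional posterior $p(\phi\mid\lambda,\boldsymbol{x})\propto p_2(\phi)\phi^{d_2}e^{-\lambda w_2\phi}$ depends on $\lambda$ only through $s:=\lambda w_2$, so with $\mu(s):=E[\phi\mid\lambda,\boldsymbol{x}]$ the task becomes proving that $s\mapsto s\mu(s)$ is non-decreasing; using $\mu'(s)=-\mathrm{Var}[\phi\mid s]$ this is equivalent to the inequality $\mu(s)\ge s\,\mathrm{Var}[\phi\mid s]$. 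I would establish this from an integration-by-parts identity on the Laplace transform of $p_2(\phi)\phi^{d_2}$, leveraging the paper's standing condition $\phi>1$ to keep the support of $\phi$ bounded away from zero and thereby control the tilted variance relative to the tilted mean. The differentiation and covariance reductions in the first two paragraphs are essentially mechanical; all the real work in the proof is in this last monotonicity bound.
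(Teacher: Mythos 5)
Your covariance formulation is clean, and it fully disposes of part (i) and of part (ii) in the case $\delta=0$: the identity $\partial\varphi/\partial w_1=-\mathrm{Cov}_{\boldsymbol{\theta}\mid\boldsymbol{x}}(h(\lambda),\lambda)$ together with Chebyshev's correlation inequality for two increasing functions of the single variable $\lambda$ is correct, and is arguably more elementary than the paper's appeal to Wijsman's ratio-of-integrals theorem for part (i). The gap is exactly where you say ``all the real work'' is, and you do not do that work. For $\delta=1$ you reduce the claim to the monotonicity of $\lambda\mapsto\lambda\,E[\phi\mid\lambda,\boldsymbol{x}]$, equivalently to $\mu(s)\ge s\,\mathrm{Var}[\phi\mid s]$ for the tilted law proportional to $\phi^{d_2}p_2(\phi)e^{-s\phi}$, and then only announce that you ``would establish'' it by integration by parts. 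This cannot be waved through, because at the level of generality of the theorem (Section 3 allows a general prior $p_2$ on $(1,l)$) the inequality is false: take $p_2$ supported on $\{1,1+a\}$ with the two masses chosen so that the tilted law at a given $s$ is fifty--fifty; then $\mu(s)=1+a/2$ while $s\,\mathrm{Var}[\phi\mid s]=sa^{2}/4$, which already exceeds $1+a/2$ for $a=10$, $s=1$. So $s\mu(s)$ need not be non-decreasing, the second Chebyshev application is unavailable, and your argument does not close. (For the uniform prior of Section 4 with $d_2=0$ the bound does happen to hold --- one can check $1+\tfrac{ae^{-x}}{(1-e^{-x})^{2}}\bigl[x-(1-e^{-x})\bigr]\ge 0$ with $x=sa$, $a=l-1$ --- but you neither restrict to that case nor carry out the computation, and the $d_2>0$ case is borderline even there, since for an untruncated gamma tilt $\mu(s)$ and $s\,\mathrm{Var}[\phi\mid s]$ are exactly equal.)

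The structural source of the trouble is the direction of conditioning. Conditioning on $\lambda$ produces the factor $E[\phi\mid\lambda,\boldsymbol{x}]$, which is decreasing in $\lambda$, so $\lambda\,E[\phi\mid\lambda,\boldsymbol{x}]$ is a product of an increasing and a decreasing function with no a priori monotonicity. The paper conditions the other way, on $\phi$: for each fixed $\phi$ the conditional posterior of $\lambda$ is an exponential tilt by $e^{-\lambda(w_1+\phi w_2)}$, so $E[h(\lambda)\mid\phi,\boldsymbol{x}]$ is decreasing in $w_2$ and in $\phi$ by plain monotone-likelihood-ratio arguments (its Lemma A.2), and the proof is finished not by a covariance identity but by a Lehmann--Romano-type rearrangement comparing the two marginal posteriors of $\phi$ at $w_2^1<w_2^2$ (its Lemma A.1). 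If you want to salvage your route you must either prove $\mu(s)\ge s\,\mathrm{Var}[\phi\mid s]$ for the specific prior class you intend the theorem to cover, or abandon the second Chebyshev step and handle the $\phi$-marginal shift by a stochastic-ordering argument as the paper does.
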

\begin{proof}
The proof is given in the \ref{appA}.
\end{proof}

\noindent For fixed $(w_1=0,d_1,d_2)$, $\varphi(0,w_2,d_1,d_2)$ is a decreasing function in $w_2$. If $\varphi(0,0,d_1,d_2)>C_r$, then there exists a unique point $c_1(d_1,d_2)$ such that
\begin{align*}
        \varphi(0,w_2,d_1,d_2)>C_r, &~~~~~ \text{for } w_2<c_1(d_1,d_2)\\
\varphi(0,w_2,d_1,d_2)<C_r, &~~~~~ \text{for } w_2>c_1(d_1,d_2).
\end{align*}
 If $\varphi(0,0,d_1,d_2)<C_r$, then $\varphi(0,w_2,d_1,d_2)<C_r$ for all $w_2>0$. In that scenario, $c_1(d_1,d_2)$ can be taken as $0$.
Now, for fixed $(d_1,d_2,w_2=c_1(d_1,d_2))$, $\varphi(d_1,d_2,w_1,c_1(d_1,d_2))$ is a decreasing function in $w_1$. Therefore,
\begin{align*}
\varphi(w_1,w_2,d_1,d_2)<C_r,&~~~~~\text{for }w_1>0 ~\text{and }  w_2>c_1(d_1,d_2).
\end{align*}
When $0<w_2<c_1(d_1,d_2)$, we get $\varphi(0,w_2,d_1,d_2)>C_r$. Since for fixed $(d_1,d_2,w_2)$, $\varphi(w_1,w_2,d_1,d_2)$ is decreasing function in $w_1$. Therefore, when $0<w_2<c_1(d_1,d_2)$ there exists a unique point $c_2(d_1,d_2,w_2)$ such that 
\begin{align*}
        \varphi(w_1,w_2,d_1,d_2)<C_r, &~~~~~ \text{for } w_1>c_2(d_1,d_2,w_2)~\text{and } 0<w_2<c_1(d_1,d_2) \\
        \varphi(w_1,w_2,d_1,d_2)>C_r, &~~~~\text{for } 0<w_1<c_2(d_1,d_2,w_2)~\text{and } 0<w_2<c_1(d_1,d_2).
\end{align*} 
Note that the upper bounds of $w_1$ and $w_2$ are $nt_1$ and $(n-d_1)(t_2-t_1)$, respectively. Define $c'_1(d_1,d_2)=\min\{c_1(d_1,d_2),nt_1\}$ and $c'_2(d_1,d_2,w_2)=\min\{c_2(d_1,d_2,w_2),(n-d_1)(t_2-t_1)\}\}$.
Therefore, the Bayes decision function can be written as 
\begin{align*}
    a_B(w_1,w_2,d_1,d_2\ | \ \boldsymbol{q})=\begin{dcases*}
        1 & for $w_2>c'_1(d_1,d_2)$ or $0<w_2<c'_1(d_1,d_2)$ and $w_1>c'_2(d_1,d_2,w_2)$\\
        0& for $0<w_2<c'_1(d_1,d_2)$ and $0<w_1<c'_2(d_1,d_2,w_2)$.
    \end{dcases*}
\end{align*}
\subsection{Alternative form of \texorpdfstring{$R_1(\boldsymbol{q},a_B)$}{}}
Now, Using the Bayes decision function, $R_1(\boldsymbol{q},a_B)$ in equation (\ref{R1}) can be written as
\allowdisplaybreaks\begin{align*}
R_1(\boldsymbol{q},a_B)=&E_{\boldsymbol{\theta}}[h(\lambda)]+\sum_{d_1=0}^n\sum_{d_2=0}^{n-d_1}\int_{\boldsymbol{\theta}}[C_r-h(\lambda)]P(W_1<c'_2(d_1,d_2,w_2), W_2<c'_1(d_1,d_2),D_1=d_1,D_2=d_2\ | \ \boldsymbol{\theta})~p(\boldsymbol{\theta})~d\boldsymbol{\theta}\\
=&E_{\boldsymbol{\theta}}[h(\lambda)]+\sum_{d_1=0}^n\sum_{d_2=0}^{n-d_1}\int\limits_{w_2=0}^{c'_1(d_1,d_2)}\int\limits_{w_1=0}^{c'_2(d_1,d_2,w_2)}\int_{\boldsymbol{\theta}}[C_r-h(\lambda)]f_{(W_1,W_2,D_1,D_2)}(w_1,w_2,d_1,d_2\ | \ \boldsymbol{\theta})p(\boldsymbol{\theta})~dw_1~dw_2~d\boldsymbol{\theta}\\
=&E_{\boldsymbol{\theta}}[h(\lambda)]+\sum_{d_1=0}^n\sum_{d_2=0}^{n-d_1} H(d_1,d_2),
\end{align*}
where 
\begin{align*}
    H(d_1,d_2)=\int\limits_{w_2=0}^{c'_1(d_1,d_2)}\int\limits_{w_1=0}^{c'_2(d_1,d_2,w_2)}\int_{\boldsymbol{\theta}}[C_r-h(\lambda)]f_{(W_1,W_2,D_1,D_2)}(w_1,w_2,d_1,d_2\ | \ \boldsymbol{\theta})p(\boldsymbol{\theta})~dw_1~dw_2~d\boldsymbol{\theta}.\\  
\end{align*}
The joint distribution function of $(W_1,W_2,D_1,D_2)$ is given in  \ref{appb2}.
\section{Bayes decision function and Bayes risk for quadratic loss function}\label{de}
Here, the loss function is taken as a quadratic loss function
$h(\lambda)=a_0+a_1\lambda+a_2\lambda^2$ as considered by Yeh \cite{yeh1994bayesian}.  The prior distributions of $\lambda$ and $\phi$ are taken as $\lambda$ follows a gamma distribution  with PDF
\begin{align}
p_1(\lambda)=\frac{\beta^\alpha}{\Gamma(\alpha)}\lambda^{\alpha-1}\exp(-\beta\lambda),~~~\lambda>0,~\alpha>0,~ \beta>0.
\end{align}
and $\phi$ follows uniform distribution with PDF
\begin{align}
    p_2(\phi)=\frac{1}{l-1}, ~~~~~1<\phi<l.
\end{align}
 Now, for each $\boldsymbol{q}=(n,t_1,t_2,m)$, we derive the Bayes decision function $a_B( . \ | \ \boldsymbol{q})$. First, we compute the joint posterior distribution of $(\lambda,\phi)$ for given $\boldsymbol{x}$. Using the equation (\ref{post}), we get
 \begin{align}
     p(\lambda,\phi \ | \ \boldsymbol{x})=\frac{\phi^{{\delta} d_2}\lambda^{d+\alpha-1}\exp[-\lambda(w_1+\phi^{\delta} w_2+\beta)]}{\int_0^\infty\int_1^l \phi^{{\delta} d_2}\lambda^{d+\alpha-1}\exp[-\lambda(w_1+\phi^{\delta} w_2+\beta)]~d\lambda~ d\phi}.
 \end{align}
To obtain Bayes decision, we need the following result:
\begin{result}\label{re1}
If $p$ is a non-negative real number, then we get
    \begin{align*}
     \int_1^l \int_0^\infty \phi^{{\delta} d_2} \lambda^p \exp(-\lambda (w_1+\phi ^{\delta} w_2))\lambda^{d_1+d_2}\lambda^{\alpha-1}\exp(-\beta\lambda) ~d\lambda~d\phi&=\int_1^l\phi^{{\delta} d_2}\frac{\Gamma(p+d_1+d_2+\alpha)}{(w_1+\phi ^{\delta} w_2+\beta)^{p+d_1+d_2+\alpha}}~d\phi\\
     &=H_1(w_1,w_2,d_1,d_2,p), ~\text{say}.
    \end{align*}
 .
\end{result}
If ${\delta}=0$, 
\begin{align*}
H_1(w_1,w_2,d_1,d_2)=\frac{\Gamma(p+d_1+d_2+\alpha)(l-1)}{(w_1+ w_2+\beta)^{p+d_1+d_2+\alpha}}
\end{align*}
and if ${\delta}=1$,
\begin{align*}
H_1(w_1,w_2,d_1,d_2)=&\int_1^l\phi^{d_2}\frac{\Gamma(p+d+\alpha)}{(w_1+\phi w_2+\beta)^{p+d+\alpha}}~d\phi\\
=&\int_1^l\phi^{d_2}\frac{\Gamma(p+d+\alpha)}{(w_1+\beta)^{p+d+\alpha}(1+\phi w_2/(w_1+\beta))^{p+d+\alpha}}~d\phi\\
=&\int_{w_2/(w_1+\beta)}^{w_2l/(w_1+\beta)}\frac{\Gamma(p+d+\alpha)}{(w_1+\beta)^{p+d_1+\alpha+1}w_2^{d_2+1}}\frac{z^{d_2}}{(1+z)^{p+d+\alpha}}~d\phi\\
&=\frac{\Gamma(d_2+1)\Gamma(p+d_1+\alpha-1)}{(w_1+\beta)^{p+d_1+\alpha+1}w_2^{d_2+1}}[I_{\zeta_1}(d_2+1,p+d_1+\alpha-1)-I_{\zeta_2}(d_2+1,p+d_1+\alpha-1)],
\end{align*}
where $\zeta_i=\eta_i/(1+\eta_i)$, for $i=1,2$, $\eta_1=w_2l/(w_1+\beta)$, $\eta_2=w_2/(w_1+\beta)$ and  $I_\eta(m,b)={B_\eta(m,b)}/{B(m,b)}$ is the cdf of beta function given by $I_\eta(m,b)$, where $B_\eta(m,b)$ is the incomplete beta function given by \begin{align*}
    B_\eta(m,b)=\int_0^\eta x^{m-1}(1-x)^{b-1} dx.
\end{align*}
Using Result \ref{re1}, $\varphi(\boldsymbol{x})$ can be written as
\begin{align}
 \varphi(\boldsymbol{x})=  a_0+a_1\frac{H_1(w_1,w_2,d_1,d_2,1)}{H_1(w_1,w_2,d_1,d_2,0)}+a_2\frac{H_1(w_1,w_2,d_1,d_2,2)}{H_1(w_1,w_2,d_1,d_2,0)}. 
\end{align}
Therefore, the Bayes decision can be obtained as follows
\begin{align*}
    a_B(w_1,w_2,d_1,d_2\ | \ \boldsymbol{q})=\begin{dcases*}
        1& if $a_0+a_1\frac{H_1(w_1,w_2,d_1,d_2,1)}{H_1(w_1,w_2,d_1,d_2,0)}+a_2\frac{H_1(w_1,w_2,d_1,d_2,2)}{H_1(w_1,w_2,d_1,d_2,0)} \leq C_r$\\
        0 &otherwise.
    \end{dcases*}
\end{align*}

Note that the prior expectation of the loss $h(\lambda)$ is
\begin{align*}
    E_{(\lambda,\phi)}[h(\lambda)]=\varphi(0,0,0,0)=\int_1^l\int_0^\infty h(\lambda)p_1(\lambda)p_2(\phi)~d\lambda~d\phi=a_0+\frac{a_1\alpha}{\beta}+a_2\frac{a_2\alpha(\alpha+1)}{\beta^2}.
\end{align*}
For no sampling case, the Bayes decision function is given by
\begin{align*}
    a_B(0,0,0,0 \ | \ 0,0,0,0)=\begin{dcases*}
        1 & if $a_0+\frac{a_1\alpha}{\beta}+a_2\frac{a_2\alpha(\alpha+1)}{\beta^2}\leq C_r$\\
        0& otherwise.
    \end{dcases*}
\end{align*}
Using the decision function $a(\cdot \ | \ \boldsymbol{q})$, the explicit form of Bayes risk is given in the  \ref{appb1}.

\section{Optimal BSPAA}\label{optim}
Here we find optimal BSPAA. The optimal BSPAA is obtained using Algorithm A.

\textbf{Algorithm A}
\begin{enumerate}
   \item Choose a sufficiently large value of $n$ say $n_0$.
    \item For each $\boldsymbol{q}=(n, t_{1}, t_{2}, m)$, derive the Bayes decision function $a_B(\cdot\ | \ \boldsymbol{q})$ to minimize $R_B(\boldsymbol{q})$ among all class of decision functions  $a(\cdot\ | \ \boldsymbol{q})$. The derivation of the Bayes decision function is discussed in Sections \ref{bayesa} and \ref{de1}.
    \item For each pair of values of $(n,m)$, minimize the Bayes risk $R_B(\boldsymbol{q},a)$ with respect to $(t_1,t_2)$. Since $t_2>t_1$, we take $t_2=t_1+h$ and minimize the equation (\ref{R}) with respect to $(t_1,h)$, where $t_1>0$ and $h>0$. Let $t_{1B}(n,m)$ and $h_B(n,m)$ be the optimal values of $t_1$ and $h$ respectively, then the optimal value of $t_2$ is $t_{2B}(n,m)=t_{1B}(n,m)+h_B(n,m)$ and
    \begin{flalign*}
       &R_B(n,t_{1B}(n,m),t_{2B}(n,m),m,a_B(\cdot \ | \ (n,t_{1B}(n,m),t_{2B}(n,m),m)))&&\\
       & ~~~~ =\min_{t_1>0,t_2>t_1} R_B(n,t_1,t_2,m,a_B(\cdot \ | \ (n,t_{1},t_{2},m))). &&
    \end{flalign*}
    \item For each value of $n$, find an integer $m_B(n)$, $0\leq m_B(n)\leq n$ which minimizes the equation (\ref{R}) with respect to $m$, that is,
    \begin{flalign*}
       & R_B(n,t_{1B}(n,m_B(n)),t_{2B}(n,m_B(n)),m_B(n),a_B(\cdot \ | \ (n,t_{1B}(n,m_B(n)),t_{2B}(n,m_B(n)),m_B(n))))&&\\
        &~~~~=\min_{0\leq m\leq n} R_B(n,t_{1B}(n,m),t_{2B}(n,m),a_B(\cdot \ | \ (n,t_{1B}(n,m),t_{2B}(n,m),m))).&&
    \end{flalign*}
    \item Finally, we find $n_B$, $0\leq n\leq n_0$ which minimizes the equation (\ref{R}), that is,
    \begin{flalign*}
       & R_B(n_B,t_{1B}(n_B,m_B(n_B)),t_{2B}(n_B,m_B(n_B)),m_B(n_B),a(\cdot \ | \ (n_B,t_{1B}(n_B,m_B(n_B)),t_{2B}(n_B,m_B(n_B)),m_B(n_B))&&\\
       &~~~~ =\min_{0<n\leq n_0} R_B(n,t_{1B}(n,m_B(n)),t_{2B}(n,m_B(n)),m_B(n),a(\cdot \ | \ (n,t_{1B}(n,m_B(n)),t_{2B}(n,m_B(n)),m_B(n)))).&&
    \end{flalign*}
    \end{enumerate}
Now, we write $a(\cdot \ | \ (n_B,t_{1B}(n_B,m_B(n_B),t_{2B}(n_B,m_B(n_B)),m_B)))=a_B$, $t_{1B}(n_B,m_B(n_B))=t_{1B}$, $t_{2B}(n_B,m_B(n_B))=t_{2B}$ and $m_B(n_B)=m_B$.  Therefore, $(\boldsymbol{q}_B,a_B)=(n_B,t_{1B},t_{2B},m_{B},a_B)$ is the optimal sampling plan. 
    \begin{theorem}
        The sampling plan $(n_B,t_{1B},t_{2B},m_B,a_B)$ is the optimal BSPAA
    \end{theorem}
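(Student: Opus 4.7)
The plan is to argue that Algorithm A realises the joint minimum of the Bayes risk in two passes: first over decision functions for each fixed sampling vector, then over sampling vectors by iterated minimisation. The target quantity is
\begin{equation*}
R^* \;=\; \inf_{\boldsymbol{q},\, a(\cdot \mid \boldsymbol{q})} R_B(\boldsymbol{q}, a),
\end{equation*}
and I need to show that the five-tuple produced by Algorithm A attains this infimum.

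The first step is to invoke Section~\ref{bayesa}. There it is established that for every fixed $\boldsymbol{q}=(n,t_1,t_2,m)$, the decision function $a_B(\cdot \mid \boldsymbol{q})$ satisfies
\begin{equation*}
R_B\bigl(\boldsymbol{q},\, a_B(\cdot \mid \boldsymbol{q})\bigr) \;=\; \min_{a(\cdot \mid \boldsymbol{q})} R_B(\boldsymbol{q}, a).
\end{equation*}
Consequently, the joint infimum reduces to the sampling-vector problem
\begin{equation*}
R^* \;=\; \inf_{\boldsymbol{q}} \widetilde{R}_B(\boldsymbol{q}), \qquad \widetilde{R}_B(\boldsymbol{q}) := R_B\bigl(\boldsymbol{q},\, a_B(\cdot \mid \boldsymbol{q})\bigr).
\end{equation*}

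The second step is to recognise that steps 3--5 of Algorithm A execute the standard iterated-infimum identity
\begin{equation*}
\inf_{n,\, m,\, t_1,\, t_2} \widetilde{R}_B(n,t_1,t_2,m) \;=\; \inf_{n}\, \inf_{0\le m\le n}\, \inf_{t_1>0,\, t_2>t_1} \widetilde{R}_B(n,t_1,t_2,m).
\end{equation*}
Step~3 computes the innermost continuous minimiser $(t_{1B}(n,m), t_{2B}(n,m))$ over the open cone $\{t_2>t_1>0\}$, step~4 computes $m_B(n)$ over the finite set $\{0,1,\ldots,n\}$, and step~5 computes $n_B$ over $\{1,\ldots,n_0\}$. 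Collecting these, the plan concludes $\widetilde{R}_B(\boldsymbol{q}_B) = R^*$, which together with the first step yields $R_B(\boldsymbol{q}_B, a_B) = R^*$, giving optimality of $(n_B,t_{1B},t_{2B},m_B,a_B)$.

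Two delicate points will need attention. The first is the truncation of $n$ to $\{1,\ldots,n_0\}$: I would argue that the linear sampling cost $n(C_s-v_s)$ in \eqref{R} dominates every other term of $\widetilde{R}_B$ as $n\to\infty$ (the risk term $R_1$ is bounded above by $\max(C_r, E_{\boldsymbol{\theta}}[h(\lambda)])$, and the failure, additional-stress and time-cost contributions are bounded by $nt_2 C_t + nC_a$ which remains controlled once $t_1, t_2$ are held near their optimising values), so $\widetilde{R}_B(n,\cdot)\to\infty$ and there exists $n_0$ beyond which no improvement is possible. The second, and the main technical obstacle, is the existence of the continuous minimiser in step~3: the joint distribution of $(W_1,W_2,D_1,D_2)$ depends continuously on $(t_1,h)$ with $h=t_2-t_1$, and through \eqref{R} together with the closed-form expressions summarised in \ref{appb1}--\ref{appB}, $\widetilde{R}_B$ is continuous in $(t_1,h)\in (0,\infty)^2$ and coercive (it blows up as $t_1+h\to\infty$ via $C_t E[\tau]$, and remains bounded near the boundary). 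Continuity and coercivity on $(0,\infty)^2$ deliver a minimiser, vindicating the notation $t_{1B}(n,m), t_{2B}(n,m)$ used throughout Algorithm~A.
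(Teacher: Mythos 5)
Your proposal is correct and follows essentially the same route as the paper: the paper proves the result by telescoping $R_B(n,t_1,t_2,m,a)-R_B(n_B,t_{1B},t_{2B},m_B,a_B)$ into four nonnegative differences corresponding exactly to your two passes (optimality of $a_B$ for fixed $\boldsymbol{q}$, then iterated minimisation over $(t_1,t_2)$, $m$, and $n$ in the order of Algorithm A). Your added discussion of coercivity in $n$ mirrors the paper's separate bound $n_B\leq(\min\{E_{\lambda}[h(\lambda)],C_r\})/(C_s-v_s)$, and your existence argument for the continuous minimiser over $(t_1,t_2)$ is a point the paper leaves implicit.
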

    \begin{proof}
        It is enough to prove that for any sampling plan $(n,t_1,t_2,m,a)$, the following inequality holds:
        \begin{align*}
           R_B(n,t_{1},t_{2},m,a)\geq R_B(n_B,t_{1B},t_{2B},m_B,a_B).
        \end{align*}
        Now, 
        \begin{align*}
            R_B(n,t_1,t_2,m,a)-R_B(n_B,t_{1B},t_{2B},m_B,a_B)=&[R_B(n,t_1,t_2,m,a)-R_B(n,t_1,t_{2},m,a_B)]\\
            &+[R_B(n,t_1,t_2,m,a_B)-R_B(n,t_{1B},t_{2B},m,a_B)]\\
            &+[R_B(n,t_{1B},t_{2B},m,a_B)-R_B(n,t_{1B},t_{2B},m_B,a_B)]\\
            &+[R_B(n,t_{1B},t_{2B},m_B,a_B)-R_B(n_B,t_{1B},t_{2B},m_B,a_B)].
        \end{align*}
  From Algorithm A, it follows that   
  $[R_B(n,t_1,t_2,m,a)-R_B(n,t_1,t_{2},m,a_B)]\geq 0$, $[R_B(n,t_1,t_2,m,a_B)-R_B(n,t_{1B},t_{2B},m,a_B)]$ $\geq 0$, $[R_B(n,t_{1B},t_{2B},m,a_B)-R_B(n,t_{1B},t_{2B},m_B,a_B)]\geq 0$, $[R_B(n,t_{1B},t_{2B},m_B,a_B)-R_B(n_B,t_{1B},t_{2B},m_B,a_B)]\geq 0$.
  Therefore, $[R_B(n,t_{1},t_{2},m,a)-R_B(n_B,t_{1B},t_{2B},m_B,a_B)]\geq 0$. Hence, the proof is completed.
    \end{proof}
    
Next, we provide {an upper bound for $n_B$}.
 Let $\boldsymbol{q}_B=(n_B,t_{1B},t_{2B},m_B)$ be the optimal BSPAA. Since all costs are positive and $C_s>v_s\geq 0$, all terms in $R_B(\boldsymbol{q}_B)$ are positive. So we get,
 \begin{align}\label{i1}
     R_B(\boldsymbol{q}_B,a_B)\geq n^*(C_s-v_s).
 \end{align}
 Let $(0,0,0,0)$ denote the no-sampling case, i.e., when we take the decision without life-testing. If the lot is rejected, the Bayes risk is $R(0,0,0,0)=C_r$ and if the lot is accepted, the Bayes risk is $R(0,0,0,0)=E_{\lambda}[h(\lambda)]$. Therefore
 $R(0,0,0,0)=\min\{E_{\lambda}[h(\lambda)], C_r\}$. Now,
 \begin{align}\label{i2}
     R_B(\boldsymbol{q}_B,a_B)\leq \min\{E_{\lambda}[h(\lambda)], C_r\}.
 \end{align}
 From (\ref{i1}) and (\ref{i2}), we get $n_B\leq (\min\{E_{\lambda}[h(\lambda)], C_r\})/(C_s-v_s)$.

\section{Numerical study and comparisons}\label{com}
Here we illustrate the proposed method using an example and compute the optimum BSPAA. Also, we study the effect of cost components and hyperparameters on the optimum solution of BSPAA. The optimum solution is compared with a conventional Bayesian sampling plan through a non-accelerated life test (CBSP) and a conventional Bayesian sampling plan through an accelerated life test (CBSPA). In the numerical example, the loss function is taken as quadratic loss function and the prior distribution is taken as joint gamma and uniform distributions.
\subsection{An Illustrated Example}
The hyperparameters of the prior are taken as $\alpha=3$, $\beta=1$ and $l=10$. The cost coefficients of the loss function are taken as $a_0=2$, $a_1=3$ and $a_2=2$. The values of other cost components are $C_a=0.1$, $v_s=0.2$, $C_s=0.5$, $C_t=5$ and $C_r=30$. The optimal sampling plan $(n_B,t_{1B},t_{2B},m_B)$, the expected time duration $E[\tau_B]$, the expected number of failures $E[D_B]$ and the optimal Bayes risk $R_B$ of the proposed plan are given in Table \ref{t1}.
\begin{table}[hbt]
    \centering
    \caption{Optimal sampling plan for BSPAA}
   \begin{tabular}{|cccc|}
    \hline
$(n_B,t_{1B},t_{2B},m_B)$ &$E[{\tau_B}]$ & $E[D_B]$  &$R_B$\\
\hline
( 3, 0.169, 0.238, 2) &0.220&2.013&27.704\\
\hline
\end{tabular}
\label{t1}
\end{table}

\noindent In our proposed model, if we fix $m=0$, then the model becomes CBSP of Lin et al. \cite{lin2002bayesian} and for $m=n$, the model becomes conventional accelerated BSP (CBSPA). For comparing the CBSP and CBSPA with BSPAA, we calculate the percentage of relative risk savings of a BSPAA over CBSP and CBSPA, which are measured by $RSS_1$ and $RRS_2$, respectively and provided by
\begin{align*}
    RRS_1=100\times \frac{R_1-R_{B}}{R_1}\%
\end{align*}
and 
\begin{align*}
    RRS_2=100\times \frac{R_2-R_{B}}{R_2}\%,
\end{align*}
where $R_1$ and $R_2$ are the optimal Bayes risk of CBSP and CBSPA, respectively. The optimal sampling plan $(n^*,t_1^*)$, the expected time duration $E[\tau^*]$, the expected number of failures $E[D^*]$ and the optimal Bayes risk $R_1$ corresponding to CBSP are provided in Table \ref{t2}. The optimal sampling plan $(n_A,t_{1A},t_{2A})$, the expected time duration $E[\tau_A]$, the expected number of failures $E[D_A]$ and the optimal Bayes risk $R_2$ of CBSPA, and for comparison $RRS_1$ and $RRS_2$ are given in Table \ref{t2}.

\begin{table}[hbt]
    \centering
    \caption{Optimal sampling plan for CBSP and CBSPA, and RRS of BSPAA over CBSP and CBSPA}
   \begin{tabular}{|cccc|cccc|cc|}
    \hline
   \multicolumn{4}{|c|}{CBSP}&\multicolumn{4}{c|}{CBSPA}&\multicolumn{2}{c|}{RRS}\\
    \hline
$(n^*,t_1^*)$ &$E[\tau^*]$& $E[D^*]$& $R_1$&$(n_A,t_{1A},t_{2A})$ &$E[{\tau_A}]$ & $E[D_A]$  &$R_2$&$RRS_1$&$RRS_2$\\
\hline
(4, 0.193) &0.190& 1.644&27.837&(3, 0.162, 0.238)    &0.213&2.163&27.723&0.48\%&0.07\%\\
\hline
\end{tabular}
\label{t2}
\end{table}
In Table \ref{t1}, we see that the Bayes risk of our proposed model is 27.702. However, in Table \ref{t2}, we see that the Bayes risk of CBSP and CBSPA are 27.837 and 27.723, respectively. This indicates that adaptive test may have a better impact than conventional accelerated and non-accelerated life tests. 
\subsection{Effect of the parameters}
Here we study the effect of cost components and hyperparameters on the optimum solution of BSPAA over CBSP and CBSPA. Optimal solutions of BSPAA, CBSP and CBPSA for different values of cost components and hyperparameters for $C_a=0,0.1,0.2$ are provided in Tables \ref{t4}-\ref{t71}. Also, for the comparison of BSPAA, CBSP and CBSPA, the $RSS_1$ and $RSS_2$ are tabulated. The other values of the cost components and hyperparameters, which are not mentioned in the tables, are kept fixed. 

 \begin{sidewaystable}
\small\begin{landscape}
    \centering
       \caption{Optimal sampling parameters of BSPAA, CBSP, and CBSPA and the RSS of BSPAA over CBSP and CBSPA for different values of $C_a$, $\alpha$ and $\beta$}
    \begin{tabular}{|cc|c|cccc|c|cccc|cc|}
    \hline
   && &\multicolumn{4}{c|}{BSPAA}&\multicolumn{1}{c|}{CBSP}&\multicolumn{4}{c|}{CBSPA}&\multicolumn{2}{c|}{RRS}\\
    \hline
$\alpha$&$\beta$ &$C_a$&$(n_B,t_{1B},t_{2B},m_B)$ &$E[{\tau_B}]$ & $E[D_B]$  &$R_B$&{\{$(n^*,t_1^*), E[\tau^*],E[D^*],R_1$\}}&$(n_A,t_{1A},t_{2A})$ &$E[{\tau_A}]$ & $E[D_A]$  &$R_2$&$RRS_1$&$RRS_2$\\
    \hline
         2&0.6& 0&(4, 0.140, 0.185, 2) &0.180&2.214&27.167&\{(4,0.217),& (4, 0.142, 0.188)&0.179&2.480& 27.196&1.26\%&0.11\%\\
          2&0.6& 0.1&(3, 0.202, 0.273, 2) &0.245&2.059&27.359&0.210, 1.843&(3, 0.198, 0.278)   & 0.240&2.223&27.377&0.56\%&0.07\%\\
           2&0.6& 0.2&(3, 0.213, 0.280, 2) &0.251&2.048&27.497&27.514\}& (3, 0.211, 0.286)  &0.248&2.219& 27.544&0.06\%&0.17\%\\
\hline
        2&0.8& 0&(4, 0.090, 0.132, 2) &0.131&1.781& 23.577&\{(5, 0.220)      & (4, 0.093, 0.136)      &0.133&1.933& 23.593&1.28\%&0.07\%\\
 2&0.8&0.1&(3, 0.129, 0.196, 2)  &0.186&1.727&23.811&0.218, 1.781&(3, 0.125, 0.195)   &0.181&1.813&23.824&0.30\%&0.05\%\\
 2&0.8&0.2&(5, 0.220, 0.220, 0)&0.218 &1.781&23.882& 23.882\}&  (3, 0.136, 0.202)   &0.188&1.799&24.046&0.00\%&0.68\%\\
 \hline
 2&1& 0&(2, 0, 0.045, 1-2)   &0.042&0.680& 19.662&\{(0, 0) & (2, 0, 0.045)  &0.042&0.680& 19.662&1.69\%&0.00\%\\
 2&1&0.1&(2, 0, 0.045, 1-2 ) &0.042&0.680&19.862&0, 0, &(2, 0, 0.045)  &0.042&0.680&19.862&0.69\%&0.00\%\\
 2&1&0.2&(0, 0, 0, 0) &0 &0&20&20\}&  (0, 0, 0) &0&0&20&0.00\%&0.00\%\\
 \hline
 2&1.2& 0&(0,0,0,0)&0&0&15.333&\{(0,0), 0, 0, 15.333\}& (0, 0,0)&0&0& 15.333&0.00\%&0.00\%\\
         \hline
         2.5&0.6& 0&(1,0,0.212,1)&0&0&29.844&\{(0,0), 0, 0, 30\}& (0, 0,0)&0&0& 20.40&0.00\%&0.00\%\\
\hline

        2.5&0.8& 0&( 4, 0.130, 0.174, 2)&0.170&2.166& 27.369&\{(5, 0.263) & (4, 0.130, 0.176)    &0.167&2.304& 27.399&1.21\%&0.11\%\\
 2.5&0.8&0.1&(3, 0.186, 0.256, 2) &0.234&2.037&27.572&0.256, 2.543&(3, 0.180, 0.258) &0.227&2.200&27.591&0.47\%&0.07\%\\
 2.5&0.8&0.2&(5, 0.263, 0.263,, 0)&0.256 &2.543&27.703&27.703\}&  (3, 0.194, 0.267) &0.236&2.189&27.768&0.00\%&0.23\%\\
 \hline
 2.5&1& 0&(3, 0.097, 0.169, 3)  &0.157&1.818& 24.156&\{(3, 0.204) & (3, 0.097, 0.169)  &0.157&1.818& 24.156&1.38\%&0.00\%\\
 2.5&1&0.1&(3, 0.113, 0.178, 2 ) &0.169&1.728&24.379&0.198, 1.114&(3, 0.108, 0.176)  &0.164&1.804&24.391&0.47\%&0.05\%\\
 2.5&1&0.2&(5, 0.208, 0.208, 0) &0.198 &1.114&24.493&24.493\}&  (2, 0.099, 0.166) &0.148&1.184&24.579&0.00\%&0.35\%\\
 \hline
 2.5&1.2& 0&(0,0,0,0)&0&0&20.403&\{(0,0), 0, 0, 20.40\}& (0, 0,0)&0&0& 20.40&0.00\%&0.00\%\\
\hline

3&0.8& 0&(0,0,0,0)&0&0&30&\{(0,0), 0, 0, 30\}& (0, 0,0)&0&0& 30&0.00\%&0.00\%\\
\hline
3&1& 0&( 4, 0.119, 0.162, 2) &0.159&2.121& 27.497&\{(4, 0.193), & ( 4, 0.118, 0.163)    &0.156&2.578& 27.526&1.22\%&0.11\%\\
3&1&0.1&( 3, 0.169, 0.238, 2) &0.220&2.013&27.704&0.190, 1.644,&(3, 0.162, 0.238)    &0.213&2.163&27.723&0.48\%&0.07\%\\
3&1&0.2&(4, 0.193, 0.193, 0)&0.190&1.644& 27.837&27.837\}& (2, 0.176 0.266)&0.215&1.516& 27.891&0.00\%&0.19\%\\
\hline
3&1.2& 0&(2, 0, 0.067, 2) &0.057&1.034& 24.486&\{(3, 0.186) & (2, 0, 0.067)   &0.057&1.034& 24.486&1.54\%&0.00\%\\
3&1.2&0.1&(2, 0, 0.067, 2)  &0.057&1.034&24.686&0.182, 1.053&(2, 0, 0.067)&0.057&1.034&24.686&0.74\%&0.00\%\\
3&1.2&0.2&(3, 0.186, 0.186, 0) &0.182&1.053& 24.870& 24.870\}&(2, 0.043, 0.107)    &0.095&1.090& 24.879&0.00\%&0.04\%\\
\hline

3.5&1& 0&(0, 0, 0, 0) &0&0& 30&\{(0,0), 0 0 30\}  &  (0, 0, 0) &0&0& 30&0.00\%&0.00\%\\
\hline
3.5&1.2& 0&(4, 0.108, 0.151, 2) &0.149&2.092&27.569&\{(4, 0.181)  &(3, 0.099, 0.150)    &0.142&1.787&27.588&1.19\%&0.00\%\\
3.5&1.2&0.1&(3, 0.152, 0.220, 2)  
&0.205&1.987&27.775&0.179, 1.554,&(3, 0.145 0.219)   &0.198&2.121&27.792&0.45\%&0.06\%\\
3.5&1.2&0.2&(4, 0.181, 0.181, 0)&0.179&1.554& 27.900& 27.900\}& (2, 0.145, 0.232) &0.191&1.478& 27.929&0.00\%&0.10\%\\
\hline

    \end{tabular}
    \label{t4}
    \end{landscape}
\end{sidewaystable}

\begin{sidewaystable}
\small\begin{landscape}
    \centering
       \caption{Optimal sampling parameters of BSPAA, CBSP, and CBSPA and the RSS of BSPAA over CBSP and CBSPA for different values of $C_a$ and $C_t$ when $v_s=0.2$ }
    \begin{tabular}{|c|c|cccc|c|cccc|cc|}
    \hline
   & &\multicolumn{4}{c|}{BSPAA}&\multicolumn{1}{c|}{CBSP}&\multicolumn{4}{c|}{CBSPA}&\multicolumn{2}{c|}{RRS}\\
    \hline
$C_t$ &$C_a$&$(n_B,t_{1B},t_{2B},m_B)$ &$E[{\tau_B}]$ & $E[D_B]$  &$R_{B}$&{\{$(n^*,t_1^*), E[\tau^*],E[D^*],R_1$\}}&$(n_A,t_{1A},t_{2A})$ &$E[{\tau_A}]$ & $E[D_A]$  &$R_2$&$RRS_1$&$RRS_2$\\
         \hline
        1& 0&(3, 0.525, 0.579, 3) &0.449&2.486& 26.409&\{(3, 0.752) ,& (3, 0.525, 0.579)&0.449&2.486& 26.409&0.18\%&0.00\%\\
 1&0.1&(3, 0.752, 0.752, 0)  &0.532&2.442&26.457&0.532, 2.442, 26.457\}&(3, 0.601, 0.639) &0.481&2.485& 26.488&0.00\%&0.12\%\\
\hline

2& 0&(4, 0.266, 0.321, 3)  &0.300&2.808&26.805&\{(4, 0.394)& (3, 0.339, 0.424) &0.351&2.442& 26.811&0.53\%&0.02\%\\
2& 0.1&(3, 0.413, 0.484, 3)  &0.393&2.455&26.926&0.360, 2.523, & (3, 0.413, 0.484)&0.393&2.455& 26.926&0.08\%&0.00\%\\
2& 0.2&(4, 0.394, 0.394, 0)  &0.360&2.523&26.948&26.948\}& (3, 0.483, 0.539) &0.428&2.456& 27.025&0.00\%&0.28\%\\
\hline
5& 0&( 4, 0.119, 0.162, 2) &0.159&2.121& 27.497&\{(4, 0.193), & ( 4, 0.118, 0.163)    &0.156&2.578& 27.526&1.22\%&0.11\%\\
5&0.1&( 3, 0.169, 0.238, 2) &0.220&2.013&27.704&0.190, 1.644,&(3, 0.162, 0.238)    &0.213&2.163&27.723&0.48\%&0.07\%\\
5&0.2&(4, 0.193, 0.193, 0)&0.190&1.644& 27.837&27.837\}& (2, 0.176 0.266)&0.215&1.516& 27.891&0.00\%&0.19\%\\
\hline
7& 0&(4, 0.106, 0.152, 2) &0.149&2.138& 27.806&\{(4, 0.185),  & (4, 0.103, 0.150) &0.145&2.336& 27.827&1.43\%&0.08\%\\
7&0.1&(3, 0.103, 0.153,  2)  &0.146&1.726&28.049&0.182, 1.596 &(3, 0.097, 0.151)&0.141&1.822&28.061&0.57\%&0.04\%\\
7&0.2&(4, 0.185, 0.185, 0)&0.182&1.596& 28.209& 28.209\}& (4, 0.155, 0.186)&0.181&2.243& 28.493&0.00\%&1.00\%\\
\hline
   \end{tabular}
    \label{t5}
\end{landscape}
\small\begin{landscape}
    \centering
       \caption{Optimal sampling parameters of BSPAA, CBSP, and CBSPA and the RSS of BSPAA over CBSP and CBSPA for different values of $C_a$ and $C_t$ when $v_s=0$}
    \begin{tabular}{|c|c|cccc|c|cccc|cc|}
    \hline
     & &\multicolumn{4}{c|}{BSPAA}&\multicolumn{1}{c|}{CBSP}&\multicolumn{4}{c|}{CBSPA}&\multicolumn{2}{c|}{RRS}\\
    \hline
$C_t$ &$C_a$&$(n_B,t_{1B},t_{2B},m_B)$ &$E[{\tau_B}]$ & $E[D_B]$  &$R_{B}$&{\{$(n^*,t_1^*), E[\tau^*],E[D^*],R_1$\}}&$(n_A,t_{1A},t_{2A})$ &$E[{\tau_A}]$ & $E[D_A]$  &$R_2$&$RRS_1$&$RRS_2$\\
\hline
         1& 0&(3, 0.518, 0.592, 3) &0.453&2.553&26.506&(3, 0.790) &(3, 0.518, 0.592)&0.453&2.553&26.506&0.22\%&0.00\%\\
1&0.1&(3, 0.602, 0.656, 3)&0.487&2.547&26.585&0.546, 2.477&(3, 0.602, 0.656)   &0.487&2.547&26.585&0.08\%&0.00\%\\
1&0.2&(3, 0.790, 0.790, 0)&0.546&2.477&26.565&26.565\}&(3, 0.691, 0.727)   &0.520&2.548&26.653&0.00\%&0.33\%\\
\hline
2& 0&(3, 0.340, 0.453, 3) &0.362&2.543&26.912&(3, 0.713),  &(3, 0.340, 0.453)&0.362&2.543&26.912&0.68\%&0.00\%\\
2&0.1&(3, 0.412, 0.503, 3) &0.399&2.529&27.028&0.516, 2.403, &(3, 0.412, 0.503)&0.399&2.529&27.028&0.25\%&0.00\%\\
2&0.2&(3, 0.713, 0.713,0)&0.516&2.403&27.096&27.096\}&(3, 0.481, 0.553)  &0.440&2.199&27.127&0.00\%&0.11\%\\
\hline
5& 0&(3, 0.142, 0.239, 3) &0.206&2.266&27.683&\{(3, 0.279)  &(3, 0.142, 0.239)&0.206&2.266&27.683&1.71\%&0.00\%\\
5&0.1&(2, 0.149, 0.255, 2) &0.200&1.549&27.857&0.259, 1.566, &(2, 0.149, 0.255)&0.200&1.549&27.857&1.09\%&0.00\%\\
5&0.2&(2, 0.172, 0.270, 2) &0.215&1.542&27.986&28.165\}&(2, 0.172, 0.270)&0.215&1.542&27.986&0.63\%&0.00\%\\
\hline
7& 0&(2, 0.027, 0.129, 2) &0.098&1.405&28.009&\{(4, 0.190) &(2, 0.027, 0.129)&0.098&1.405&28.009&2.36\%&0.00\%\\
7&0.1&(2, 0.034, 0.132, 2) &0.102&1.395&28.192&0.187, 1.626,&(2, 0.034, 0.132)&0.102&1.395&28.192&1.172\%&0.00\%\\
7&0.2&(1, 0, 0.083, 1) &0.050&0.629&28.297&28.686\}&(1, 0, 0.083)&0.050&0.629&28.297&1.36\%&0.00\%\\
\hline

      \end{tabular}
    \label{t51}
\end{landscape}
\end{sidewaystable}
\begin{sidewaystable}
\small\begin{landscape}
    \centering
       \caption{Optimal sampling parameters of BSPAA, CBSP, and CBSPA and the RSS of BSPAA over CBSP and CBSPA for different values of $C_a$ and $C_r$}
    \begin{tabular}{|c|c|cccc|c|cccc|cc|}
    \hline
   & &\multicolumn{4}{c|}{BSPAA}&\multicolumn{1}{c|}{CBSP}&\multicolumn{4}{c|}{CBSPA}&\multicolumn{2}{c|}{RRS}\\
    \hline
$C_r$ &$C_a$&$(n_B,t_{1B},t_{2B},m_B)$ &$E[{\tau_B}]$ & $E[D_B]$  &$R_{B}$&{\{$(n^*,t_1^*), E[\tau^*],E[D^*],R_1$\}}&$(n_A,t_{1A},t_{2A})$ &$E[{\tau_A}]$ & $E[D_A]$  &$R_2$&$RRS_1$&$RRS_2$\\
         \hline
        10& 0&(0,0,0,0)&0&0& 10&\{(0, 0) ,0 ,0, 10\}& (0, 0, 0)&0&0& 10&0.00\%&0.00\%\\
\hline
20& 0&(0,0,0,0)&0&0& 10&\{(0, 0) ,0 ,0, 10\}& (0, 0, 0)&0&0& 10&0.00\%&0.00\%\\
\hline
30& 0&(3, 0.200, 0.263, 3)&0.237&2.132& 27.621&\{(3, 0.272), & (3, 0.200,0.263)&0.237&2.132& 27.621&0.91\%&0.00\%\\
30&0.1&(3,0.204,0.260,2)&0.241&1.965&27.700&0.252, 1.542,&(3,0.207,0.269)&0.242&2.136&27.793&0.63\%&0.33\%\\
30&0.2&(3, 0.227, 0.272, 0)&0.252&1.542& 27.875&27.875\}& (3, 0.241,0.327)&0.281&2.338& 27.950&0.00\%&0.27\%\\
\hline
40& 0&(5, 0.119, 0.154, 3)  &0.153&2.604&31.537&\{(5, 0.179)  & (4, 0.119, 0.163)&0.158&2.344& 31.550&0.92\%&0.04\%\\
40& 0.1&(4, 0.148, 0.186, 3)  &0.181&2.298&31.812& 0.178, 1.949, & (4, 0.141, 0.180) &0.175&2.339& 31.828&0.06\%&0.05\%\\
40& 0.2&(5, 0.179, 0.179, 0)  &0.178&1.949&31.830& 31.830\}& (4, 0.175, 0.219) &0.210&2.538& 32.084&0.00\%&0.79\%\\
\hline
50& 0&(5, 0.114, 0.148, 4) &0.146&2.649& 34.043&\{(4, 0.169)   & (5, 0.114, 0.147)    &0.145&2.639& 34.045&0.83\%&0.01\%\\
50&0.1&(4, 0.115, 0.148, 3)&0.146&2.084&34.342&0.167, 1.496, &(4,0.119, 0.157)&0.153&2.234&34.349&0.03\%&0.02\%\\
50&0.2&(4, 0.169, 0.169, 0)&0.167&1.496& 34.329& 34.329\}& (4, 0.113, 0.152)&0.148&2.229&34.344&0.00\%&0.04\%\\
\hline
60& 0&(0,0,0,0) &0&0& 35&\{(0, 0), 0 0 35\}  &  (0, 0, 0) &0&0&  35&0.00\%&0.00\%\\
\hline
  \end{tabular}
    \label{t6}  
\end{landscape}
\small\begin{landscape}
    \centering
       \caption{Optimal sampling parameters of BSPAA, CBSP, and CBSPA and the RSS of BSPAA over CBSP and CBSPA for different values of $C_a$ and $l$}
    \begin{tabular}{|c|c|cccc|c|cccc|cc|}
    \hline
   & &\multicolumn{4}{c|}{BSPAA}&\multicolumn{1}{c|}{CBSP}&\multicolumn{4}{c|}{CBSPA}&\multicolumn{2}{c|}{RRS}\\
    \hline
$l$ &$C_a$&$(n_B,t_{1B},t_{2B},m_B)$ &$E[{\tau_B}]$ & $E[D_B]$  &$R_{B}$&{\{$(n^*,t_1^*), E[\tau^*],E[D^*],R_1$\}}&$(n_A,t_{1A},t_{2A})$ &$E[{\tau_A}]$ & $E[D_A]$  &$R_2$&$RRS_1$&$RRS_2$\\
         \hline
10& 0&(3, 0.200, 0.263, 3)&0.237&2.132& 27.621&\{(3, 0.272), & (3, 0.200,0.263)&0.237&2.132& 27.621&0.91\%&0.00\%\\
10&0.1&(3,0.204,0.260,2)&0.241&1.965&27.700&0.252, 1.542,&(3,0.207,0.269)&0.242&2.136&27.793&0.63\%&0.33\%\\
10&0.2&(3, 0.227, 0.272, 0)&0.252&1.542& 27.875&27.875\}& (3, 0.241,0.327)&0.281&2.338& 27.950&0.00\%&0.27\%\\
\hline
20& 0&(3, 0.161, 0.203, 2)   &0.193&1.755&27.462&\{(3,0.272)    & ( 3, 0.158, 0.202)  &0.190&1.862& 27.469&1.48\%&0.02\%\\
20& 0.1&(3, 0.174, 0.213, 2)   &0.200&2.017&27.626& 0.252, 1.542, & (3, 0.168, 0.210) &0.193&2.182& 27.659&0.89\%&0.12\%\\
20& 0.2&(3, 0.185, 0.221, 2)  &0.208&1.994&27.783& 27.875\}& (3, 0.179, 0.219)  &0.202&2.176& 27.845&0.33\%&0.22\%\\
\hline
50& 0&(3, 0.164, 0.182, 2) &0.176&1.654&  27.404&\{(3, 0.0.272)   & (3, 0.162, 0.181)    &0.174&1.768& 27.424&0.33&0.22\%\\
50&0.1&(3, 0.174, 0.192, 2) &0.185&1.678&27.566&0.252, 1.542,&(3, 0.171, 0.189) &0.181&1.769&27.613&1.11\%&0.17\%\\
50&0.2&(3, 0.185, 0.221, 2) &0.208&1.994& 27.783& 27.875\}& ( 3, 0.179, 0.219) &0.202&2.176&27.845&0.33\%&0.22\%\\
\hline
100& 0&( 3, 0.162, 0.172, 3) &0.167&1.508& 27.407&\{( 3, 0.272 )  &  ( 3, 0.162, 0.172) &0.167&1.508& 27.407&1.68\%&0.00\%\\
100& 0.1&(2, 0.117, 0.342, 2)  &0.179&1.877& 27.536&0.252, 1.542  & (2, 0.117, 0.342)  &0.179&1.877& 27.536&1.22\%&0.00\%\\
100& 0&(2, 0.131, 0.357, 2)  &0.190&1.880& 27.678&27.875\} &  (2, 0.131, 0.357)  &0.190&1.880& 27.678&0.70\%&0.00\%\\
\hline

    \end{tabular}
    \label{t61}  
\end{landscape}
\end{sidewaystable}
\begin{sidewaystable}
\begin{landscape}
 \centering
       \caption{Optimal sampling parameters of BSPAA, CBSP, and CBSPA and the RSS of BSPAA over CBSP and CBSPA for different values of $C_a$ and $a_0$ }
    \begin{tabular}{|c|c|cccc|c|cccc|cc|}
    \hline
   & &\multicolumn{4}{c|}{BSPAA}&\multicolumn{1}{c|}{CBSP}&\multicolumn{4}{c|}{CBSPA}&\multicolumn{2}{c|}{RRS}\\
    \hline
$a_0$ &$C_a$&$(n_B,t_{1B},t_{2B},m_B)$ &$E[{\tau}]$ & $E[D]$  &$R_B$&\{$(n^*,t_1^*), E[\tau],E[D],R_1$\}&$(n_A,t_{1A},t_{2A})$ &$E[{\tau}]$ & $E[D]$  &$R_2$&$RRS_1$&$RRS_2$\\
         \hline
         0& 0&(3, 0.154, 0.221, 2)   &0.205&1.978&26.685&\{(5, 0.235), &(3, 0.150, 0.222)     &0.200&2.112&26.704&0.33\%&0.07\%\\
0&0.1&(4, 0.108, 0.149, 2)
 &0.147&2.060& 26.480&0.231, 2.346,&(3, 0.138, 0.214)&0.192&2.121&26.503&1.10\%&0.09\%\\
0&0.2&(5, 0.235, 0.235, 0)  &0.231&2.346&26.775&26.775\}&(3, 0.163, 0.230)   &0.208&2.096&26.898&0.00\%&0.46\%\\
\hline
         1& 0&(4, 0.113, 0.155, 2)   &0.152&2.090&26.998&\{(5, 0.243), &(4, 0.115, 0.157)     &&&27.025&1.18\%&0.10\%\\
1&0.1&(3, 0.162, 0.230, 2)
 &0.213&1.996& 27.203&0.239, 2.397,&(3, 0.156, 0.230)&0.206&2.138&27.223&0.43\%&0.07\%\\
1&0.2&(5, 0.243, 0.243, 0)  &0.239&2.397&27.321&27.321\}&(3, 0.170, 0.239)   &0.215&2.125&27.414&0.00\%&0.34\%\\
\hline
2& 0&(3, 0.200, 0.263, 3)&0.237&2.132& 27.621&\{(3, 0.272), & (3, 0.200,0.263)&0.237&2.132& 27.621&0.91\%&0.00\%\\
2&0.1&(3,0.204,0.260,2)&0.241&1.965&27.700&0.252, 1.542,&(3,0.207,0.269)&0.242&2.136&27.793&0.63\%&0.33\%\\
2&0.2&(3, 0.227, 0.272, 0)&0.252&1.542& 27.875&27.875\}& (3, 0.241,0.327)&0.281&2.338& 27.950&0.00\%&0.27\%\\
\hline
3&0&(4, 0.125, 0.170, 2) &0.167&2.165&27.978&\{(4, 0.201) &(3, 0.116, 0.172) &0.160&1.892&28.004&1.15\%&0.09\%\\
3&0.1&(3, 0.175, 0.246, 2) &0.226&2.035&28.185&0.197, 1.691, &(3, 0.167, 0.246) &0.218&2.191&28.203&0.43\%&0.06\%\\
3& 0.2&(4, 0.201, 0.201, 0) &0.197&1.691&28.306&28.306\}&(3, 0.182, 0.256) &0.228&2.182&28.388&0.00\%&0.29\%\\

\hline
5& 0&(3, 0.246, 0.338, 3) &0.288&2.374&29.547&\{(3, 0.354), &(3,0.246, 0.338)&0.288&2.374&29.547&1.19\%&0.00\%\\
5&0.1&(2, 0.341, 0.439, 2) &0.323&1.663&29.656&0.316, 1.791,&(2, 0.341, 0.439)&0.323&1.663&29.656&0.82\%&0.00\%\\
5&0.2&(2, 0.350, 0.445, 2)&0.327&1.661&29.738&29.902\}&(2, 0.350, 0.445)&0.327&1.661&29.738&0.55\%&0.00\%\\
\hline
10& 0&(0,0,0,0)&0&0&30&(0,0),0,0,30&(0,0,0)&0&0&30&0.00\%&0.00\%\\
\hline
    \end{tabular}
    \label{t7}
    \end{landscape}
\begin{landscape}
    \centering
       \caption{Optimal sampling parameters of BSPAA, CBSP, and CBSPA and the RSS of BSPAA over CBSP and CBSPA for different values of $C_a$ and $a_1$}
    \begin{tabular}{|c|c|cccc|c|cccc|cc|}
    \hline
   & &\multicolumn{4}{c|}{BSPAA}&\multicolumn{1}{c|}{CBSP}&\multicolumn{4}{c|}{CBSPA}&\multicolumn{2}{c|}{RRS}\\
    \hline
$a_1$ &$C_a$&$(n_B,t_{1B},t_{2B},m_B)$ &$E[{\tau}]$ & $E[D]$  &$R_B$&\{$(n^*,t_1^*), E[\tau],E[D],R_1$\}&$(n_A,t_{1A},t_{2A})$ &$E[{\tau}]$ & $E[D]$  &$R_2$&$RRS_1$&$RRS_2$\\
         \hline
0& 0&(4, 0.109, 0.150, 3) &0.147&2.219& 23.462&\{(5, 0.172)  & (4, 0.108, 0.149) &0.145&2.245& 23.464&0.02\%&1.21\%\\
0&0.1&(3, 0.096, 0.147, 2) &0.140&1.724&23.704&0.223, 1.407,&(3, 0.094, 0.148) &0.139&1.814&23.718&0.19\%&0.06\%\\
0&0.2&(3, 0.172, 0.172 0)&0.223&1.407& 23.750&23.750\}& (3, 0.105, 0.155)   &0.146&1.795& 23.944&0.00\%&0.81\%\\
\hline
1& 0&(4, 0.127, 0.174, 4)  &0.168&2.423& 25.025&\{(5, 0.198)   &(4, 0.127, 0.174)  &0.168&2.423& 25.025&1.12\%&0.00\%\\
1&0.1&(3, 0.120, 0.178, 2)  &0.168&1.846&25.250&0.196, 2.092,&(3, 0.118, 0.178)  &0.165&1.939&25.267&0.24\%&0.07\%\\
1&0.2&(5, 0.198, 0.198, 0)&0.196&2.092& 25.311&25.311\}& (3, 0.129, 0.186)    &0.173&1.934& 25.478&0.00\%&0.65\%\\
\hline
2& 0.1&(4, 0.101, 0.141, 2)  &0.139&2.025& 26.375&\{(3, 0.272), & (3, 0.141, 0.209) &0.190&2.062& 26.593&1.15\%&0.08\%\\
2& 0.1&(3, 0.145, 0.209, 2) &0.195&1.940& 26.575&0.252, 1.542, & (3, 0.141, 0.209) &0.190&2.062& 26.593&1.21\%&0.09\%\\
2&0.2&(3, 0.272, 0.272, 0)&0.252&1.542& 26.775&26.775\}& (3, 0.163, 0.230)&0.208&2.122& 26.898&0.00\%&0.36\%\\
\hline
3& 0&(3, 0.200, 0.263, 3)&0.237&2.132& 27.621&\{(3, 0.272), & (3, 0.200,0.263)&0.237&2.132& 27.621&0.91\%&0.00\%\\
3&0.1&(3,0.204,0.260,2)&0.241&1.965&27.700&0.252, 1.542,&(3,0.207,0.269)&0.242&2.136&27.793&0.63\%&0.33\%\\
3&0.2&(3, 0.227, 0.272, 0)&0.252&1.542& 27.875&27.875\}& (3, 0.241,0.327)&0.281&2.338& 27.950&0.00\%&0.27\%\\
\hline
5& 0&(0,0,0,0)&0&0&30&(0,0),0,0,30&(0,0,0)&0&0&30&0.00\%&0.00\%\\
%5& 0&(0, 0, 0, 0) &0&0&30&\{(4, 0.248) & (4, 0.191, 0.230)&0.221&2.513& 29.390&0.00\%&0.00\%\\
%5& 0.1& (3, 0.241, 0.271, 2)&0.252&1.807&29.513& 0.240, 1.942,&(3, 0.239, 0.270)    &0.250&1.907&29.555 &0.00\%&0.00\%\\
%5& 0.2&(4, 0.248, 0.248, 0) &0.240&1.942&29.583& 29.583\}&(2, 0.352, 0.440) &0.326&1.644& 29.668&0.00\%&0.00\%\\
\hline
\end{tabular}
\label{t71}
\end{landscape}
\end{sidewaystable}

In Table \ref{t4}, the hyperparameters $\alpha$ and $\beta$ vary while other values of the cost components and hyperparameters are fixed. For $(\alpha,\beta)=(2,1.2),(2.5,0.6),(2.5,1.2),(3,0.8),(3.5,1)$, we observe that the optimal BSPs represent no sampling cases. It is observed that for fixed $\alpha=2$ and $C_a$, when $\beta$ increases from 0.6 to 1, the Bayes risk and expected time duration $E[\tau_B]$ of BSPAA 
decrease.  This is due to the fact that when $\beta$ increases, the prior mean $\beta/(\alpha-1)$ increases. Also, it is seen that for fixed $\beta=1$ and $C_a=0,0.1,0.2$, when $\alpha$ increases from 2 to 3, the Bayes risk and expected time duration $E[\tau_B]$ of BSPAA  increase. It is seen that when $C_a$ increases, $RRS_1$ decreases. Therefore, for the higher values of $c_a$, CBSP is better than CBSPA and BSPAA is equivalent to CBSP.

In Table \ref{t5}, we provide the effect of $C_t$ when $c_a=0,0.1,0.2$, $v_s=0$ for fixed values of other parameters and coefficients. In Table \ref{t51}, we provide the effect of $C_t$ when $v_s=0.2$. It is seen that when $c_t$ increases, the expected time duration decreases as expected. When $C_a=0$ and $v_s=0$, it is seen that in the optimal sampling plan of BSPAA, $m_B=n_B$. This means that the optimal sampling plan of BSPAA is equivalent to the optimal sampling plan of CBSPA. Therefore $RSS_2=0\%$. This is due to the fact that when $C_a$ and $v_s$ are not incorporated into the Bayes risk, the number of failures in the life test does not depend on the sampling plan. In CBSPA, we get more information in a shorter time duration.  Therefore, BSPAA is equivalent to CBSPA.

In Table \ref{t6}, we provide the effect of $C_r$. For $C_r=10,20,60$, the optimal BSPs represent no sampling cases. For $C_r=10,20$, the Bayes risk is $R_B=C_r$, that is the lot is rejected without life testing. For $C_r=60$, $R_B=a_0+a_1\alpha/\beta+a_2\alpha(\alpha+1)/\beta^2$ and the lot is accepted without life testing. For $C_r=30,40,50$, the optimal sampling parameters have sampling cases. 

In Table \ref{t61}, we provide the effect of $l$. When $l$ increases, the Bayes risk decreases for fixed $C_a$. This is due to the fact that when $l$ increases, the mean accelerated factor $\phi$ increases. We get more information in a shorter period of time. Due to similar facts, the expected time duration decreases when $l$ increases for fixed $C_a$. Since the CBSP does not depend on $l$, the optimal sampling parameters remain fixed for different values of $l$.

 The effects of $a_0$ and $a_1$ are provided Table \ref{t7} and \ref{t71}, respectively. we provide the effect of $a_1$. It is seen that when $a_i$, $i=0,1$ increases. the Bayes risk increases. For high values of $a_i$, $i=0,1$, the optimal BSPs are no sampling cases, that is the lot is rejected without life testing.
 
From Tables \ref{t4}-\ref{t71}, it is observed that  for sampling cases, if $n_B=n^*=n_A$, $E[\tau^*]\leq E[\tau_B]\leq E[\tau_A]$ and also $E[D^*]\geq E[D_B]\geq E[D_A]$. This demonstrates that the ordering of the duration is the opposite of the ordering of the expected number of failures. Therefore, when $C_t$, $v_s$ and $C_a$ are incorporated into Bayes risk, the BSPAA is better than the other two sampling plans, which can be observed from the values of $RRS_1$ and $RRS_2$. It is observed that when $0<m_B<n$, we find the $RSS_1$ is greater than $1.2\%$ but for $RRS_2$, we find some values greater than $0.15\%$. As CBSP does not depend on $C_a$, only one value is provided in Tables \ref{t4}-\ref{t71} when $C_a$ varies.

\section{Data analysis}\label{nu}
The proposed methodology of determining optimum BSP is illustrated by the data set of oil
breakdown times of insulating fluid subjected to different constant levels of high voltage reported in \cite{lawless2011statistical}. For the sake of illustration,  30kV is assumed to be the normal use voltage and 36kV is used as the accelerated voltage. The data is given in Table \ref{t8}. Zheng \& Fang \cite{zheng2017exact} analyzed the data and showed that the exponential distribution fits well.

The mean oil breakdown time of insulating fluid at normal conditions and accelerated conditions are 75.78 and 4.61 respectively. The MLEs of $\lambda$ and $\phi$ are $\hat{\lambda}=0.013$ and $\hat{\phi}=16.45$ respectively. For computational purposes, we assume that $\lambda$ has gamma prior with mean 0.013. The hyperparameters of gamma prior of $\lambda$ are taken as $\alpha=1.3$ and $\beta=100$, and the hyperparameter of uniform prior of $\phi$ are taken as $l=30$.  
\begin{table}[hbt!]
    \centering
     \caption{The data set of oil
breakdown times}
    \begin{tabular}{l}
    \hline
\textbf{Normal conditions (30kV):}\\
7.74, 17.05, 20.46, 21.02, 22.66, 43.40, 47.30, 139.07, 144.12, 175.88, 194.90\\
\hline
\textbf{Accelerated stress condition (36kV): }\\
1.97, 0.59, 2.58, 1.69, 2.71, 25.50, 0.35, 0.99, 3.99, 3.67, 2.07, 0.96, 5.35, 2.90, 13.77\\
\hline
    \end{tabular}
    \label{t8}
\end{table}

We consider the cost components of the manufacturer, which are taken as $C_a=0.1$, $v_s=0.2$, $c_s=0.4$, $C_t=0.05$, $a_0=2$, $a_1=700$, $a_2=80000$ and $C_r=30$. The optimum RASP is obtained as $(n_b,t_{1B},t_{2B},m_B)=(4, 18.29, 28.29, 2)$. Next, we have to carry out a life test under the optimum RASP. For illustration, we generate Type-I adaptive step-stress data based on the optimum life testing plan $(n_B,t_{1B},t_{2B},m_B)=(4, 18.29, 28.29, 2)$ from the exponential distribution with $\lambda=0.013$ and $\phi=16.45$. The data sets $(\boldsymbol{y},d_1,d_2)=(y_1,y_2,y_3,y_4,d_1,d_2)$ and the corresponding decisions about the lot acceptance or rejection $a_B$ are given in Table \ref{ttt} for illustration purposes. Also, in Table \ref{ttt}, the decision of changing the stress at $t_{1B}$, $w_1$  and $w_2$ are provided. 

\begin{table}[hbt!] 
    \centering
     \caption{Simulated data sets and corresponding decision about the lot }
   \resizebox{\textwidth}{!}{ \begin{tabular}{|c|ccccccccccc|}
    \hline
       $i$& $y_1$ &$y_2$ &$y_3$ &$y_4$& $d_1$& Change the stress & $d_2$& $w_1$ &$w_2$& $\varphi(\boldsymbol{y},d_1,d_2)-C_r$& $a_B$ \\
        \hline
     1&  18.76& 19.58& 20.00& 23.56& 0& Yes & 4& 73.16& 8.75& 4.33 &0\\
     2&   6.83&   7.97&  24.72&-& 2& No& 1 &51.39& 16.43&54.67& 0\\
       3&  10.20& 19.44& 20.02& -&1&Yes&2& 65.07& 12.88& -2.13&1\\
4&5.97 & 7.90& -&-&2& No &0& 50.45& 20.00& 24.62&0\\
  5&      15.62& 18.98& 19.74&21.78&1&Yes&3&70.49& 5.64& 19.37&0\\
 6&  20.09& 20.58& 21.81& 23.49&0&Yes&4& 73.16& 12.81&-3.31&1\\
  7&  14.84& 21.55&21.75& 28.11&1&Yes&3&69.71& 16.54&-1.57&1\\
        \hline
    \end{tabular}
   }
    \label{ttt}
\end{table}
The BSP can be illustrated as follows. Four items are put on life test at the stress level 30kV. After time $t_1=18.29$, the number of failures is observed. If the number of failures is less than $2$, the stress is increased to 36kV and the test continues up to $t_2=28.29$. If the number of failures is greater than equals to $2$, the stress is unchanged and the test continues up to $t_2=28.29$ at the stress level 30kV.   Then we calculate  $e_i=\varphi(y_i,d)-C_r$, for $i=1,\ldots,7$ If  $e_i<0$, then $a_B=1$ and the lot is accepted. If $e_i\geq 0$ then $a_B=0$ and the lot is rejected.

 \section{Conclusion}\label{con}
This work considered designing an adaptive BSP based on a simple step-stress test for type-I censored data. It is seen that when an additional cost for increasing stress levels and salvage costs are incorporated, the BSPAA is more effective than the other two sampling plans for type-I censored data. When an additional cost for increasing stress levels is relatively very high, the BSP is better than CBSPA. When $C_a=0$ and $v_s=0$, the CBSPA is always better than any other sampling plan. In that case, BSPAA is equivalent to CBSPA. In this work we have considered exponential distribution for illustration. However, the proposed method can be extended for other lifetime distributions. The work can also be extended to other censoring schemes and more than 2-stage step-stress test.

Tsai et al. \cite{tsai2014efficient} studied an efficient sampling plan for type-I censored data where the manufacturer can take a decision and terminate the test before completing the life test. The procedure is known as the curtailment procedure. Chen et al. \cite{chen2017curtailed} studied the curtailment BSP for type-II censored data. Then Chen et al. \cite{chen2023designing} studied the curtailment BSP based on simple step stress ALT for type-II censored data. Similarly, this work can be extended to the curtailment of BSP.  
%\section*{ Funding Information} No funding was received for this manuscript.
 %  \section*{ Declarations of interest} None
\bibliographystyle{apalike}
\bibliography{citation}

\appendix
\section{D{etailed expression of} \texorpdfstring{$E[D]$, ${n_{as}}$ {and} $E[\tau]$}{}:}\label{appB}
\subsection{Expected number of failures (\texorpdfstring{$E[D]$}{}): } The total number of failures during the life test is defined as $D=D_1+D_2$. Expected number of failures is given by 
\begin{align*}
    E[D]=\int_{\boldsymbol{\theta}}  E[D_1+D_2 \ | \ \boldsymbol{\theta}] p(\boldsymbol{\theta})~d\boldsymbol{\theta},
\end{align*}
where
\begin{align*}
    E[D_1+D_2\ | \ \boldsymbol{\theta}]&=E[D_1 \ | \ \boldsymbol{\theta}]+E[D_2\ | \ \boldsymbol{\theta}]\\
    &=E[D_1\ | \boldsymbol{\theta}]+\sum_{d_1=0}^nE[D_2\ | \ D_1=d_1]P(D_1=d_1)\\
&=n[1-\exp(-\lambda t_1)]+\sum_{d_1=0}^{n} (n-d_1)\left(1-\frac{R(t_2)}{R(t_1)}\right)\binom{n}{d_1}(1-R(t_1))^{d_1}(R(t_1))^{n-d_1}\\
&=n(1-\exp(-\lambda t_1)+n\exp(-\lambda t_1)-\left[\sum_{d_1=0}^{m-1} (n-d_1)\exp[-\lambda (t_1+\phi(t_2-t_1)]+\sum_{d_1=m}^{n}(n- d_1)\exp(-\lambda t_2)\right]\\
&~~~~~~~~~~~~~~~~~~~~~~~~~~~~~~~~~~~~~~~~~~~~~~~~~~~~~~~~~~~~~\binom{n}{d_1}(1-\exp(-\lambda t_1))^{d_1}\exp[-\lambda(n-d_1-1)t_1]\\
&=n-\sum_{d_1=0}^{m-1}\sum_{i=0}^{d_1}(-1)^{d_1-i}\binom{d_1}{i} (n-d_1)\exp[-\lambda ((n-i)t_1+\phi(t_2-t_1)]\\
&~~~~~~~~-\sum_{d_1=m}^{n}\sum_{i=0}^{d_1}(-1)^{d_1-i}\binom{d_1}{i}(n- d_1)\exp[-\lambda((n-i)t_1+(t_2-t_1)) ].
\end{align*}

\subsection{Expected number of items continued to higher stress levels after \texorpdfstring{$t_1$} (\texorpdfstring{$n_{as}$}): }
Let $n_{as}$ be the expected number of items continued to higher stress levels after $t_1$. Then
\begin{align*}
    n_{as}=E_{\boldsymbol{\theta}}[(n-D_1)\ | \ D_1<m,\boldsymbol{\theta}]&=\int_{\lambda=0}^\infty\sum_{d_1=0}^{m-1}(n-d_1)\binom{n}{d_1}(1-\exp(-\lambda t_1))^{d_1}(\exp(-(n-d_1)t_1))~p_1(\lambda)d\lambda\\ 
    &=\sum_{d_1=0}^{m-1}\sum_{i=0}^{d_1}(n-d_1)\binom{n}{d_1}\binom{d_1}{i}(-1)^{d_1-i}\int_{\lambda=0}^\infty(\exp(-(n-i)t_1))~p_1(\lambda)~d\lambda
\end{align*}

\subsection{ Expected test duration (\texorpdfstring{$E[\tau]$}{}): }
The duration of the test under type-I censoring is defined as
$\tau=\min\{Z_n,t_2\}$.
\allowdisplaybreaks{\begin{align*}
E[\tau\ | \ \boldsymbol{\theta}]=&E[\min\{Z_n,t_2\}]\\
=&E[t_2\ | Z_n\geq t_2]P(Z_n>t_2)+E[Z_n\ |\ Z_n<t_2][1-P(Z_n>t_2)]\\
=&t_2R_{Z_n}(t_2)]-\int_{t=0}^{t_2}t\frac{\partial}{\partial t}R_{Z_n}(t)]~ dt\\
=&t_2R_{Z_n}(t_2)-t_2R_{Z_n}(t_2)+\int_{t=0}^{t_2}R_{Z_n}(t)~ dt\\
=&\int_{t=0}^{t_2}R_{Z_n}(t)~ dt
\end{align*}}
When $m> 0$, for $0<t<t_1$,
\begin{align*}
    R_{Z_n}(t)=1-P(Z_n<t)=1-P(\max\{Y_1,\ldots,Y_n\}\leq t\}=1-[1-R(t)]^n&=\sum_{i=1}^{n}\binom{n}{i}(-1)^{i+1}[R(t)]^i\\
    &=\sum_{i=1}^{n}\binom{n}{i}(-1)^{i+1}\exp(-\lambda i t)
\end{align*}
 and for $t_1<t<t_2$, the distribution $Z_n$ depends on $D_1$.
\begin{align*}
R_{Z_n}(t)=&\sum_{d_1=0}^{n-1}P(Z_n>t\ |\ D_1=d_1)P(D_1=d_1)\\
  =&\sum_{d_1=0}^{n-1}[1-P(Z_n<t\ |\ D_1=d_1)][P(D_1=d_1)]\\
  =&\sum_{d_1=0}^{n-1}[1-P(\text{$(n-d_1)$ items fails in the interval $(t_1,t)$})]\binom{n}{d_1}[1-R(t_1)]^{d_1}[R(t_1)]^{n-d_1}\\
  =&\sum_{d_1=0}^{n-1}\left[1-\left(1-\frac{R(t)}{R(t_1)}\right)^{n-d_1}\right]\binom{n}{d_1}\sum_{j=0}^{d_1}\binom{d_1}{j}(-1)^j(R(t_1))^{n-d_1+j}\\
  =&-\sum_{d_1=0}^{n-1}\sum_{j=0}^{d_1}\sum_{k=1}^{n-d_1}\binom{n}{d_1}\binom{d_1}{j}\binom{n-d_1}{k}(-1)^{j+k}\left(\frac{R(t)}{R(t_1)}\right)^{k}(R(t_1))^{n-d_1+j}\\
  =&-\sum_{d_1=0}^{m-1}\sum_{j=0}^{d_1}\sum_{k=1}^{n-d_1}\binom{n}{d_1}\binom{d_1}{j}\binom{n-d_1}{k}(-1)^{j+k}\exp[-\lambda((n-d_1+j)t_1+\phi k(t-t_1))]\\
  &-\sum_{d_1=m}^{n-1}\sum_{j=0}^{d_1}\sum_{k=1}^{n-d_1}\binom{n}{d_1}\binom{d_1}{j}\binom{n-d_1}{k}(-1)^{j+k}\exp[-\lambda((n-d_1+j)t_1+k(t-t_1))].
\end{align*}
When $m=0$, the stress does not change after $t_1$. Therefore the hazard rate is unchanged up to $t_2$. So, when $m=0$,
\begin{align*}
  R_{Z_n}(t)=\sum_{i=1}^{n}\binom{n}{i}(-1)^{i+1}[R(t)]^i=\sum_{i=1}^{n}\binom{n}{i}(-1)^{i+1}\exp(-\lambda i t) & ~~~~\text{ for } 0<t<t_2.
\end{align*}
Therefore \begin{align*}
    E[\tau]&=\int_{\boldsymbol{\theta}}E[\tau\ | \ \boldsymbol{\theta}]p(\boldsymbol{\theta})~ d\boldsymbol{\theta}\\
    &=\int_{\boldsymbol{\theta}}\int_{t=0}^{t_1}R_{Z_n}(t)~dt~d\boldsymbol{\theta}+\int_{\boldsymbol{\theta}}\int_{t=t_1}^{t_2}R_{Z_n}(t)~dt~d\boldsymbol{\theta}
\end{align*}

\section{ Proof of the monotonicity Property of Bayes decision function :}\label{appA}
\textbf{Proof of Theorem \ref{the}:}
From equation (\ref{e1}), the likelihood function is given by
\begin{align*}
    L(\lambda\ |\ (w_1,w_2,d_1,d_2),\mathbf{q})\propto \lambda^{d}\phi^{ d_2{\delta}}\exp\left[-\lambda(w_1+\phi^{\delta} w_2)\right].
\end{align*}

\begin{enumerate}[(i)]
    \item 

For fixed $(w_2,d_1,d_2)$ , consider two points $w_1^1$ and $w_1^2$ such that $w_1^1<w_1^2$. Then it is sufficient to prove that $\varphi(w_1^1,w_2,d_1,d_2)>\varphi(w_1^2,w_2,d_1,d_2)$ 
when $h(\lambda)$ is increasing function in $\lambda$. Let
$f_1(\lambda)=\lambda^{d}\exp(-\lambda w_1^1)$, $f_2(\lambda)=\lambda^{d}\exp(-\lambda w_1^2)$. Note that $g_1(\lambda)=h(\lambda)\int_{\phi}\phi^{d_2}\exp(-\lambda\phi w_2) p(\boldsymbol{\theta}) d\phi$ and $g_2(\lambda)=\int_{\phi}\phi^{d_2}\exp(-\lambda\phi w_2)p(\boldsymbol{\theta}) d\phi$ if $\delta=1$ and  $g_1(\lambda)=h(\lambda)\exp(-\lambda w_2)$ and $g_2(\lambda)=\exp(-\lambda w_2) $ if ${\delta}=0$. Thus,
\begin{align*}
    \varphi(w_1^1,w_2,d_1,d_2)=\frac{\int_\lambda f_1(\lambda)g_1(\lambda)d\lambda}{\int_\lambda f_1(\lambda)g_2(\lambda)d\lambda}
\end{align*}
and
\begin{align*}
    \varphi(w_1^2,w_2,d_1,d_2)=\frac{\int_\lambda f_2(\lambda)g_1(\lambda)d\lambda}{\int_\lambda f_2(\lambda)g_2(\lambda)d\lambda}.
\end{align*}
We assume that all integrals are finite. Clearly, $f_2(\lambda)$ and $g_2(\lambda)$ are non negative functions of $\lambda$. Now, $f_1(\lambda)/f_2(\lambda)$ $=\exp[\lambda(w^2_1-w^1_1)]$ and $g_1(\lambda)/g_2(\lambda)=h(\lambda)$ which is increasing function in $\lambda$ for $\lambda>0$. By Theorem 2  in Wijsman\cite{wijsman1985useful}, we get $\varphi(w_1^1,w_2,d_1,d_2)>\varphi(w_1^2,w_2,d_1,d_2)$. Therefore, $\varphi(w_1^1,w_2,d_1,d_2)$ is decreasing in $w_1$ for fixed $(w_2,d_1,d_2)$.

\item For ${\delta}=1$,    
the joint posterior distribution of $\lambda$ and $\phi$ is given by
\begin{align*}
    p(\lambda,\phi\ | \ (w_1,w_2,d_1,d_2))=\frac{\lambda^d\phi^{d_2}\exp[-\lambda(w_1+\phi w_2)]~p_1(\lambda)~ p_2(\phi)}{K(w_1,w_2,d_1,d_2)},
\end{align*}
where
$K(w_1,w_2,d_1,d_2)=\int_\lambda\int_\phi\lambda^d\phi^{d_2}\exp[-\lambda(w_1+\phi w_2)] ~p_1(\lambda) ~p_2(\phi)~d\lambda~ d\phi=E_{(\lambda,\phi)}[\lambda^d\phi^{d_2}\exp[-\lambda(w_1+\phi w_2)]]$.
The marginal posterior distribution of $\phi$ is 
\begin{align*}
    p(\phi\ | \ (w_1,w_2,d_1,d_2))=\frac{\int_\lambda\lambda^d\phi^{d_2}\exp[-\lambda(w_1+\phi w_2)]~p_1(\lambda)~ p_2(\phi)~d\lambda}{K(w_1,w_2,d_1,d_2}.
\end{align*}
Therefore, the posterior conditional distribution of  $\lambda | \ \phi$ is given by
\begin{align*}
    p(\lambda\ | \ \phi , (w_1,w_2,d_1,d_2))=\frac{\lambda^d\phi^{d_2}\exp[-\lambda(w_1+\phi w_2)]~p_1(\lambda)~ p_2(\phi)}{\int_\lambda\lambda^d\phi^{d_2}\exp[-\lambda(w_1+\phi w_2)]~p_1(\lambda)~ p_2(\phi)~d\lambda}.
\end{align*}
The posterior expectation of $h(\lambda)$ can be written as
\begin{align*}
   \varphi(w_1,w_2,d_1,d_2)=\int_\phi \int_\lambda h(\lambda)~p(\lambda \ | \ \phi, (w_1,w_2,d_1,d_2))~d\lambda~p(\phi\ | \ (w_1,w_2,d_1,d_2))~d\phi.
\end{align*}
To prove the decreasing property of $\varphi(w_1,w_2,d_1,d_2)$ with respect to $w_2$ for fixed $(w_1,d_1,d_2)$, we need the following two lemmas:
\begin{lemma}\label{l1}
For fixed $(d_1,d_2,w_2)$, $w_2^1$ and $w_2^2$ with $w_2^1<w_2^2$, the likelihood ratio
$${\frac{p(\phi \ | \ (w_1,w_2^1,d_1,d_2)}{p(\phi \ | \ (w_1,w_2^2,d_1,d_2)}}$$  is increasing in $\phi$.
\end{lemma}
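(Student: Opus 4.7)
The plan is to reduce the claim to a monotonicity property of an explicit univariate ratio and then verify that property by direct calculation under the prior used in the paper.

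First, I would integrate $\lambda$ out of the joint posterior (for $\delta = 1$),
\begin{align*}
p(\lambda, \phi \mid w_1, w_2, d_1, d_2) \propto \lambda^d \phi^{d_2} \exp[-\lambda(w_1 + \phi w_2)]\, p_1(\lambda)\, p_2(\phi),
\end{align*}
to obtain the marginal posterior
\begin{align*}
p(\phi \mid w_1, w_2, d_1, d_2) \propto \phi^{d_2}\, p_2(\phi)\, M(w_1 + \phi w_2),
\end{align*}
where $M(s) = \int_0^\infty \lambda^d e^{-\lambda s} p_1(\lambda)\, d\lambda$. The factor $\phi^{d_2} p_2(\phi)$ does not involve $w_2$, so it cancels in the ratio $p(\phi \mid w_1, w_2^1, d_1, d_2)/p(\phi \mid w_1, w_2^2, d_1, d_2)$, and the normalizing constants contribute only a multiplicative factor independent of $\phi$. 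Thus it suffices to show that the map $\phi \mapsto M(w_1 + \phi w_2^1)/M(w_1 + \phi w_2^2)$ is nondecreasing.

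Next, I would differentiate the logarithm of this ratio with respect to $\phi$ and use the Laplace-transform identity $M'(s)/M(s) = -E_s^\ast[\lambda]$, where $E_s^\ast$ denotes expectation under the tilted density proportional to $\lambda^d e^{-\lambda s} p_1(\lambda)$. The derivative then reduces to $w_2^2 E_{s_2}^\ast[\lambda] - w_2^1 E_{s_1}^\ast[\lambda]$ with $s_i = w_1 + \phi w_2^i$. For the gamma prior $p_1(\lambda) \propto \lambda^{\alpha-1} e^{-\beta \lambda}$ used throughout the paper, $E_s^\ast[\lambda] = (d + \alpha)/(s + \beta)$, and substitution gives
\begin{align*}
\frac{(d + \alpha)(w_2^2 - w_2^1)(w_1 + \beta)}{(w_1 + \phi w_2^1 + \beta)(w_1 + \phi w_2^2 + \beta)} \geq 0,
\end{align*}
which is nonnegative since $w_2^2 > w_2^1$ and $w_1, \beta \geq 0$. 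This establishes the desired monotonicity of the posterior ratio in $\phi$.

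The main obstacle I anticipate is making the argument robust for priors $p_1$ beyond the gamma. In that generality the claim becomes the submodularity of $(\phi, w_2) \mapsto \log M(w_1 + \phi w_2)$, equivalent to $E_s^\ast[\lambda] \geq \phi w_2 \operatorname{Var}_s^\ast(\lambda)$ at $s = w_1 + \phi w_2$, which is transparent for gamma but not obvious otherwise. An alternative route, mirroring the use of Wijsman's inequality in part (i) of Theorem \ref{the}, would be to tilt the integrand in $M$ by $e^{-\lambda \phi w_2^2}$, rewrite the ratio as $E_\phi^{w_2^2}[\exp(\lambda \phi (w_2^2 - w_2^1))]$, and then identify a suitable monotone likelihood ratio on the factor $\exp[\lambda(w_2^2 - w_2^1)]$ to invoke Wijsman's theorem directly; this keeps the proof in the same spirit as part (i) but requires more care in verifying the hypotheses.
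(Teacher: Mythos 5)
Your argument is sound and follows the same basic reduction as the paper: both of you cancel the $\phi^{d_2}p_2(\phi)$ factor and the normalizing constants, reduce the claim to showing that $\phi\mapsto M(w_1+\phi w_2^1)/M(w_1+\phi w_2^2)$ is nondecreasing where $M(s)=\int_0^\infty\lambda^d e^{-\lambda s}p_1(\lambda)\,d\lambda$, and then differentiate. The difference is in the execution of that last step, and here your version is actually the more careful one. The paper differentiates the ratio for a \emph{general} prior $p_1$, but its quotient-rule numerator is miscomputed: the true numerator is
\begin{align*}
w_2^2\Bigl(\textstyle\int e^{-\lambda\phi w_2^1}h_2\Bigr)\Bigl(\textstyle\int \lambda e^{-\lambda\phi w_2^2}h_2\Bigr)-w_2^1\Bigl(\textstyle\int \lambda e^{-\lambda\phi w_2^1}h_2\Bigr)\Bigl(\textstyle\int e^{-\lambda\phi w_2^2}h_2\Bigr),
\end{align*}
with $h_2(\lambda)=\lambda^d e^{-\lambda w_1}p_1(\lambda)$, whereas the paper collapses this into a single product proportional to $(w_2^2-w_2^1)$, from which nonnegativity is read off immediately but incorrectly. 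The correct numerator is nonnegative precisely when $w_2^2E^\ast_{s_2}[\lambda]\geq w_2^1E^\ast_{s_1}[\lambda]$, which is exactly the condition you isolate; your explicit verification via $E^\ast_s[\lambda]=(d+\alpha)/(s+\beta)$ for the gamma prior is correct and closes the argument for the setting in which the lemma is actually applied in Sections 4--6. Your caveat about general $p_1$ is also well founded: the condition $E^\ast_s[\lambda]\geq \phi w_2\,\mathrm{Var}^\ast_s(\lambda)$ can fail (for instance for a two-point prior with one atom near $0$ and one far out), so the lemma as stated for an arbitrary prior is not salvageable by the paper's computation alone and genuinely needs either the gamma structure or an additional hypothesis on $p_1$. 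In short: same route, but your derivative is the right one, and it exposes a gap in the paper's own proof rather than merely reproducing it.
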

\begin{proof}
    \begin{align*}
        \frac{p(\phi \ | \ (w_1,w_2^1,d_1,d_2)}{p(\phi \ | \ (w_1,w_2^2,d_1,d_2)}&=\frac{K(w_1,w_2^2,d_1,d_2)}{K(w_1,w_2^1,d_1,d_2)}\frac{\int_{\lambda} \lambda^d\phi^{d_2}\exp[-\lambda(w_1+\phi w_2^1)]p(\lambda,\phi)d \lambda}{\int_{\lambda} \lambda^d \phi^{d_2}\exp[-\lambda(w_1+\phi w_2^2)]p(\lambda,\phi)d \lambda}\\
        &=\frac{K(w_1,w_2^2,d_1,d_2)}{K(w_1,w_2^1,d_1,d_2)}\frac{\int_{\lambda} \lambda^d\exp[-\lambda(w_1+\phi w_2^1)]p_1(\lambda)d \lambda}{\int_{\lambda} \lambda^d \exp[-\lambda(w_1+\phi w_2^2)]p_1(\lambda) d \lambda}.
    \end{align*}
    Since $\lambda, \phi$ are independent,  $p(\lambda,\phi)=p_1(\lambda)p_2(\phi)$. Since $\lambda^d\phi^{d_2}\exp[-\lambda(w_1+\phi w_2)]\geq0$, $K(w_1,w_2,d_1,d_2)=E_{(\lambda,\phi)}[\lambda^d\phi^{d_2}\exp[-\lambda(w_1+\phi w_2)]\geq 0$. 
    Let $h_1(\phi,w_2,\lambda)=\exp[-\lambda\phi w_2]$, and  $h_2(\lambda)=\lambda^d\exp[-\lambda w_1]p_1(\lambda)$. Now, $h_1'(\phi,w_2,\lambda)=\partial h_1(\phi,w_2,\lambda)/\partial \phi=-w_2\lambda h_1(\phi,w_2,\lambda)$. It is enough to prove that 
\begin{align*}
    h_3(\phi,w_2^1,w_2^2)=\frac{\int_\lambda h_1(\phi,w_2^1,\lambda)h_2(\lambda)d\lambda}{\int_\lambda h_1(\phi,w_2^2,\lambda)h_2(\lambda)d\lambda}
\end{align*}
is increasing in $\phi$.
Now,
\begin{align*}
    \frac{\partial h_3(\phi,w_2^1,w_2^2) }{\partial \phi}=\frac{-(w_2^1-w_2^2)\int_\lambda h_1(\phi,w_2^2,\lambda)h_2(\lambda)d\lambda\int_\lambda w_2^1 \lambda h_1(\phi,w_2^1,\lambda)h_2(\lambda)d\lambda}{(\int_\lambda h_1(\phi,w_2^2,\lambda)h_2(\lambda)d\lambda)^2}\geq 0~~~[\text{as } w_2^1<w_2^2 ]
\end{align*}
This proves the lemma.
\end{proof}
\begin{lemma}\label{lem2} If $h(\lambda)$ is increasing in $\lambda$, then  $ \varphi(w_1,w_2,d_1,d_2,\phi)$
\begin{enumerate}[(i)]
    \item is decreasing in $w_2$ for fixed $(w_1,d_1,d_2,\phi)$.
    \item is decreasing in $\phi$ for fixed $(w_1,w_2,d_1,d_2)$.
\end{enumerate}

\end{lemma}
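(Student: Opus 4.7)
The plan is to reduce both parts of Lemma \ref{lem2} to the same monotone likelihood ratio argument that was used in the proof of Theorem \ref{the}(i), by noting that the conditional posterior $p(\lambda \mid \phi, (w_1,w_2,d_1,d_2))$ depends on the pair $(w_2,\phi)$ only through the product $\phi w_2$ inside the exponential. This symmetry makes (i) and (ii) formally the same computation, with the roles of $w_2$ and $\phi$ swapped.

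First I would write out the conditional posterior explicitly. From the expression already displayed in the paper,
\[
p(\lambda \mid \phi,(w_1,w_2,d_1,d_2)) \;\propto\; \lambda^{d} \exp\bigl[-\lambda(w_1+\phi w_2)\bigr]\,p_1(\lambda),
\]
which is functionally identical to the one-parameter posterior used in the proof of Theorem \ref{the}(i), but with the effective ``$w_1$'' replaced by $w_1+\phi w_2$. Hence $\varphi(w_1,w_2,d_1,d_2,\phi)$ can be written as a ratio
\[
\varphi(w_1,w_2,d_1,d_2,\phi)=\frac{\int_{\lambda} h(\lambda)\lambda^{d} e^{-\lambda(w_1+\phi w_2)} p_1(\lambda)\,d\lambda}{\int_{\lambda} \lambda^{d} e^{-\lambda(w_1+\phi w_2)} p_1(\lambda)\,d\lambda}.
\]

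For part (i), fix $(w_1,d_1,d_2,\phi)$ and take $w_2^1<w_2^2$. The ratio of the two conditional posteriors reduces, after cancellation of the $\lambda$-independent normalizing constants and of $\lambda^d p_1(\lambda) e^{-\lambda w_1}$, to
\[
\frac{p(\lambda \mid \phi,(w_1,w_2^1,d_1,d_2))}{p(\lambda \mid \phi,(w_1,w_2^2,d_1,d_2))} \;\propto\; \exp\!\bigl[\lambda\phi(w_2^{2}-w_2^{1})\bigr],
\]
which is strictly increasing in $\lambda$ since $\phi>1>0$ and $w_2^{2}-w_2^{1}>0$. Combined with the hypothesis that $h(\lambda)$ is increasing, Wijsman's Theorem 2 (already invoked in Appendix \ref{appA} to prove Theorem \ref{the}) gives $\varphi(w_1,w_2^1,d_1,d_2,\phi) \geq \varphi(w_1,w_2^2,d_1,d_2,\phi)$. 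For part (ii), fix $(w_1,w_2,d_1,d_2)$ and take $\phi_1<\phi_2$; the analogous computation yields the likelihood ratio $\exp[\lambda w_2(\phi_2-\phi_1)]$, again increasing in $\lambda$ whenever $w_2>0$, so Wijsman's theorem once more gives the decreasing property in $\phi$. The degenerate case $w_2=0$ is immediate because $\phi$ then drops out of the conditional posterior entirely.

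There is no real obstacle beyond recognising this symmetry; the only bookkeeping item worth flagging is that the case $\delta=0$ needs a separate (trivial) remark, since then $\phi$ does not appear in the likelihood and both monotonicity assertions become vacuous. Once part (ii) is in hand it can be combined, in the subsequent step of Appendix \ref{appA}, with Lemma \ref{l1} via a standard ``iterated expectation plus MLR'' argument to deduce the unconditional monotonicity $\varphi(w_1,w_2,d_1,d_2)$ decreasing in $w_2$ asserted by Theorem \ref{the}(ii).
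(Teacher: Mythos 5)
Your proposal is correct and follows essentially the same route as the paper: both represent $\varphi(w_1,w_2,d_1,d_2,\phi)$ as a ratio of integrals over $\lambda$ and apply Wijsman's Theorem 2 with the likelihood ratio $\exp[\lambda\phi(w_2^2-w_2^1)]$ (resp.\ $\exp[\lambda w_2(\phi_2-\phi_1)]$) increasing in $\lambda$ together with $h(\lambda)$ increasing. Your explicit remarks on the $w_2=0$ and $\delta=0$ degenerate cases are minor additions the paper leaves implicit.
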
 
\begin{proof}
Note that \begin{align*}
 \varphi(w_1,w_2,d_1,d_2,\phi)  =\int_\lambda h(\lambda)p(\lambda \ | \phi, (w_1,w_2,d_1,d_2)) d\lambda=\frac{\int_0^\infty h(\lambda)\lambda^{d_1+d_2}\phi^{d_2}\exp(-\lambda(w_1+\phi w_2))p({\lambda})p_2(\phi)d\lambda}{\int_0^\infty \lambda^{d_1+d_2}\phi^{d_2}\exp(-\lambda(w_1+\phi w_2))p({\lambda)p_2(\phi)}d\lambda}
\end{align*}
Consider two points $w_2^1$ and $w_2^1$ such that $w_2^1<w_2^2$. Let $g_1(\lambda)=h(\lambda)\lambda^{d_1+d_2}\phi^{d_2}p_1(\lambda)$, $g_2(\lambda)=\lambda^{d_1+d_2}\phi^{d_2}p_1(\lambda)$, $f_1(\lambda)=\exp(-\lambda(w_1+\phi w_2^1))$ and 
$f_2(\lambda)=\exp(-\lambda(w_1+\phi w_2^2))$. Clearly, $f_2(\lambda),g_2(\lambda)$ are non negative functions. Note that ${f_1(\lambda)}/{f_2(\lambda)}=\exp[\lambda\varphi(w_2^2-w_2^1)]$ is increasing in $\lambda$. and $g_1(\lambda)/g_2(\lambda)=h(\lambda)$ which is also increasing  function in $\lambda$ for $\lambda>0$. By Theorem 2  in Wijsman\cite{wijsman1985useful}, we get $\varphi(w_1,w_2^1,d_1,d_2,\phi)>\varphi(w_1,w_2^2,d_1^2,d_2,\phi)$. Therefore, $\varphi(w_1,w_2,d_1,d_2,\phi)$ is decreasing in $w_2$ for fixed $(w_1,d_1,d_2,\phi)$.

Similarly, let $\phi_1<\phi_2$, $g_1(\lambda)=h(\lambda)\lambda^dp(\lambda)$, $g_2(\lambda)=\lambda^dp(\lambda)$, $f_1(\lambda)=\phi_1^{d_2}\exp[-\lambda(w_1+\phi_1 w_2)]p(\phi_1)$ and $f_2(\lambda)=\phi_2^{d_2}\exp[-\lambda(w_1+\phi_2 w_2)]p_2(\phi)$ Then $g_(\lambda)/g_2(\lambda)=h(\lambda)$ is increasing in $\lambda$ and $f_1(\lambda)/f_2(\lambda)=k(\phi_1,\phi_2)\exp[\lambda w_2(\phi_2-\phi_1)]$ is also increasing in $\lambda$. By Theorem 2  in Wijsman\cite{wijsman1985useful}, we get $\varphi(w_1,w_2,d_1,d_2,\phi_1)> \varphi(w_1,w_2,d_1^2,d_2,\phi_2)$. Therefore, $\varphi(w_1,w_2,d_1,d_2,\phi)$ is decreasing in $\phi$ for fixed $(w_1,w_2,d_1,d_2)$.    
\end{proof}
\vspace{0.2cm}\\
\noindent Now, we prove that $\varphi(w_1,w_2^1,d_1,d_2)-\varphi(w_1,w_2^2,d_1,d_2)\geq  0$ for $w_2^1<w_2^2$ and $h(\lambda)$ is increasing in $\lambda$. The proof is in a similar manner as given in Lemma 3.4.2 of the book of Lehmann \& Romano \cite{lehmann1986testing}.
\begin{flalign}\label{a1}
   &~~~~~~ \varphi(w_1,w_2^1,d_1,d_2)-\varphi(w_1,w_2^2,d_1,d_2)\nonumber\\&=\int_{\phi}\varphi(w_1,w_2^1,d_1,d_2,\phi)p(\phi \ | \ (w_1,w_2^1,d_1,d_2)d\phi-\int_{\phi}\varphi(w_1,w_2^2,d_1,d_2,\phi)p(\phi \ | \ (w_1,w_2^2,d_1,d_2)d\phi\nonumber\\
    &\geq \int_{\phi}\varphi(w_1,w_2^2,d_1,d_2,\phi)p(\phi \ | \ (w_1,w_2^1,d_1,d_2)d\phi-\int_{\phi}\varphi(w_1,w_2^2,d_1,d_2,\phi)p(\phi \ | \ (w_1,w_2^2,d_1,d_2)d\phi\nonumber\\
    &~~~~~~~~~~~~~~~~~~~~~~~~~~~~~~~~~~~~~~~~~~~~~~~~~~~~~~~~~~~~~~~~~~~~~~~~~~~~~~~~~~~~~[\text{using Lemma \ref{lem2}(i)}]\nonumber\\
    &=\int_\phi \varphi(w_1,w_2^2,d_1,d_2,\phi)[p(\phi \ | \ (w_1,w_2^1,d_1,d_2))-p(\phi \ | \ (w_1,w_2^2,d_1,d_2)] ~d\phi.
\end{flalign}
Define the sets U and V as  $U=\{\phi \ |\  p(\phi \ |\ (w_1,w_2^1,d_1,d_2))> p(\phi \ |\ (w_1,w_2^2,d_1,d_2))\}$ and $V=\{\phi \ |\  p(\phi  | (w_1,w_2^1,d_1,d_2)$ $ \leq p(\phi \ |\ (w_1,w_2^2,d_1,d_2))\}$ and let $u=\inf\limits_U\{\varphi(w_1,w_2^2,d_1,d_2,\phi)\}$ and $v=\sup\limits_V\{\varphi(w_1,w_2^2,d_1,d_2,\phi)\}$.
From equation (\ref{a1}), we have
\begin{align}\label{l2}
    &\varphi(w_1,w_2^1,d_1,d_2)-\varphi(w_1,w_2^2,d_1,d_2)\nonumber\\&\geq\int\limits_U \varphi(w_1,w_2^2,d_1,d_2,\phi)[p(\phi \ | \ (w_1,w_2^1,d_1,d_2))-p(\phi \ | \ (w_1,w_2^2,d_1,d_2)] ~d\phi\nonumber\\
    &~~~~~~~~~~~~~~~~~~~~~+\int\limits_V \varphi(w_1,w_2^2,d_1,d_2,\phi)[p(\phi \ | \ (w_1,w_2^1,d_1,d_2))-p(\phi \ | \ (w_1,w_2^2,d_1,d_2)] ~d\phi\nonumber\\
    &\geq u\int\limits_U [p(\phi \ | \ (w_1,w_2^1,d_1,d_2))-p(\phi \ | \ (w_1,w_2^2,d_1,d_2)] ~d\phi+ v\int\limits_V [p(\phi \ | \ (w_1,w_2^1,d_1,d_2))-p(\phi \ | \ (w_1,w_2^2,d_1,d_2)] ~d\phi\nonumber\\
    &=(u-v)\int\limits_U [p(\phi \ | \ (w_1,w_2^1,d_1,d_2))-p(\phi \ | \ (w_1,w_2^2,d_1,d_2)] ~d\phi
\end{align}
The sets $U$ and $V$ can be written as

$$U=\left\{\phi\ | \ \frac{p(\phi \ |\ (w_1,w_2^1,d_1,d_2))}{p(\phi \ |\ (w_1,w_2^2,d_1,d_2))}>1\right\}$$
and $$V=\left\{\phi\ | \ \frac{p(\phi \ |\ (w_1,w_2^1,d_1,d_2)) }{ p(\phi \ |\ (w_1,w_2^2,d_1,d_2))}<1\right\}$$.
For all $\phi_1\in U$ and $\phi_2 \in V$, we have
\begin{align*}
    \frac{p(\phi_1 \ |\ (w_1,w_2^1,d_1,d_2))}{p(\phi_1 \ |\ (w_1,w_2^2,d_1,d_2))}>1> \frac{p(\phi_2 \ |\ (w_1,w_2^1,d_1,d_2))}{p(\phi_2 \ |\ (w_1,w_2^2,d_1,d_2))}.
\end{align*}
From Lemma \ref{l1}, $${\frac{p(\phi \ | \ (w_1,w_2^1,d_1,d_2)}{p(\phi \ | \ (w_1,w_2^2,d_1,d_2)}}$$  is increasing in $\phi$. Therefore
\begin{align*}
 \frac{p(\phi_1 \ |\ (w_1,w_2^1,d_1,d_2))}{p(\phi_1 \ |\ (w_1,w_2^2,d_1,d_2))}> \frac{p(\phi_2 \ |\ (w_1,w_2^1,d_1,d_2))}{p(\phi_2 \ |\ (w_1,w_2^2,d_1,d_2))}\implies \phi_1>\phi_2.
\end{align*}
Now, from Lemma \ref{lem2} (ii), $\varphi(w_1,w_2,d_1,d_2,\phi)$ is a decreasing function in $\phi$. Therefore we get $\varphi(w_1,w_2,d_1,d_2,\phi_1)<\varphi(w_1,w_2,d_1,d_2,\phi_2)$, $\forall ~\phi_1\in U$ and $\phi_2\in V$. This implies that $\inf\limits_U\{\varphi(w_1,w_2,d_1,d_2,\phi)\}<\sup\limits_V\{\varphi(w_1,w_2,d_1,d_2,\phi)\}=u>v$. From equation  (\ref{l2}) we get $\varphi(w_1,w_2^1,d_1,d_2)-\varphi(w_1,w_2^2,d_1,d_2)\geq0$. Therefore $\varphi(w_1,w_2,d_1,d_2)$ is decreasing in $w_2$ for fixed $(w_1,d_1,d_2)$
\end{enumerate}

\section{The joint distribution of \texorpdfstring{$(W_1,W_2,D_1,D_2)$}{}}\label{appb2}
\begin{theorem}(Balakrishnan et. al \cite{balakrishnan2009exact})
The joint distribution of $(W_1,W_2,D_1,D_2\  | \  \lambda,\phi)$ is given below:
\begin{align*}
&f_{(W_1,W_2,D_1,D_2)\ | \ \lambda,\phi)}(w_1,w_2,d_1,d_2\ | \ \lambda,\phi)\\
=&\sum_{j=0}^{d_1}\sum_{k=0}^{d_2}A_{d_1d_2jk}p_1(\lambda)p_2(\lambda,\phi)\begin{dcases*}  
I(nt_1)I(n(t_2-t_1))& $d_1=0,d_2=0$\\
I(nt_1)~ \gamma\left(w_2-T_{d_1d_2k},d_2,{\phi^{\delta}\lambda}\right)&$d_1=0,d_2>0$\\
\gamma\left(w_1-T_{d_1j},d_1,{\lambda}\right)I(n(t_2-t_1))& $d_1>0,d_2=0$\\
\gamma\left(w_1-T_{d_1j},d_1,{\lambda}\right)\gamma\left(w_2-T_{d_1d_2k},d_2,{\phi^{\delta}\lambda}\right)&$d_1>0,d_2>0$,
\end{dcases*}
\end{align*}
where 
  $A_{d_1d_2jk}=\frac{n!}{d_1!d_2!(n-d)!}\binom{d_1}{j}\binom{d_2}{k}(-1)^{(j+k)}$, $p_1(\lambda)=\exp(-\lambda(n-d_1+j)t_1)$ and $p_2(\lambda,\phi)=\exp(-\phi^{\delta}\lambda(n-d+k)(t_2-t_1))$,
    the function $\gamma(x,\alpha,\beta)$ is the PDF of the gamma distribution with parameters $(\alpha,\beta)$ which is defined as
\begin{align*}  \gamma(x,\alpha,\beta)=\begin{cases}
    \frac{\beta^{\alpha}}{\Gamma(\alpha)}x^{\alpha-1}\exp(-\beta x)& \text{for } x>0\\
    0 & \text{otherwise},
    \end{cases}
\end{align*}
and  the function $I(x)$ is a degenerate distribution at the point $a$ which is defined as \begin{align}
    I(x)=\begin{cases}
    1& \text{for } x=nT_0\\
    0 & \text{otherwise},
    \end{cases}
\end{align}
\end{theorem}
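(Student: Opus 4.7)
The plan is to condition on $(D_1, D_2)$ and exploit the conditional independence of the pre-$t_1$ and post-$t_1$ portions of the experiment given $D_1$, using the memoryless property of the exponential distribution and the cumulative exposure model. First, I would handle Phase 1 on $(0, t_1)$: here $n$ items are on stress $s_0$, so $D_1 \mid \lambda \sim \mathrm{Bin}(n, 1 - e^{-\lambda t_1})$, and given $D_1 = d_1$, the ordered failures $Z_1 < \cdots < Z_{d_1}$ are distributed as the order statistics of $d_1$ i.i.d.\ exponential draws truncated to $(0, t_1)$. A standard change of variables to the sum $S_1 = \sum_{j=1}^{d_1} Z_j$, classical in Type-I censoring analysis, yields a density for $W_1 = S_1 + (n - d_1) t_1$ that is a linear combination of shifted gamma densities on $((n-d_1)t_1, n t_1)$, obtained by expanding the truncation factor $(1 - e^{-\lambda t_1})^{d_1}$ by the binomial theorem. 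This produces precisely the alternating sum $\sum_{j=0}^{d_1} \binom{d_1}{j} (-1)^j \gamma(w_1 - T_{d_1 j}, d_1, \lambda)$ together with the factor $\exp[-\lambda (n - d_1 + j) t_1]$, matching the $p_1(\lambda)$ and the gamma piece in the statement; when $d_1 = 0$ the phase contributes only the deterministic mass at $W_1 = n t_1$, giving the indicator term $I(n t_1)$.

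Next I would handle Phase 2 on $(t_1, t_2]$ conditional on $D_1 = d_1$. At time $t_1^+$ the $n - d_1$ survivors either remain at stress $s_0$ (if $d_1 \geq m$, so $\delta = 0$) or are transferred to $s_1$ (if $d_1 < m$, so $\delta = 1$). By the memoryless property of the exponential and the cumulative exposure model, each survivor has residual lifetime $\mathrm{Exp}(\phi^{\delta} \lambda)$, independently of the pre-$t_1$ history. Repeating the Phase 1 argument on the shifted variables $(Z_j - t_1)$ with sample size $n - d_1$, intensity $\phi^{\delta} \lambda$, and censoring horizon $t_2 - t_1$, I obtain the analogous representation of the joint density of $(W_2, D_2)$: an alternating binomial sum of shifted gamma densities together with the factor $\exp[-\phi^{\delta} \lambda (n - d + k)(t_2 - t_1)]$ with $d = d_1 + d_2$, and the point mass $I(n(t_2 - t_1))$ when $d_2 = 0$.

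Because the two phases are conditionally independent given $(D_1, \lambda, \phi)$, the full joint density is the product of the two pieces, with an overall multinomial combinatorial factor $\tfrac{n!}{d_1!\, d_2!\, (n - d)!}$ accounting for the choice of which specific items fall into each of the three disjoint groups (fail before $t_1$, fail in $(t_1, t_2]$, survive to $t_2$). Collecting the binomial coefficients and signs produces the constant $A_{d_1 d_2 j k}$ exactly as stated, and splitting into the four cases $(d_1 = 0, d_2 = 0)$, $(d_1 = 0, d_2 > 0)$, $(d_1 > 0, d_2 = 0)$, $(d_1 > 0, d_2 > 0)$ recovers the four branches of the displayed formula.

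The main obstacle will be the bookkeeping for the alternating-sign representation of the truncated gamma density and verifying that the shift constants $T_{d_1 j}$ and $T_{d_1 d_2 k}$ produced by the change of variables agree with those appearing in the statement. This is essentially the same algebraic reduction carried out in Balakrishnan et al.\ \cite{balakrishnan2009exact}, but applied here in the adaptive two-stress setting where the Phase 2 intensity $\phi^{\delta}\lambda$ is itself an indicator-driven function of the Phase 1 outcome through $\delta = \mathbf{1}(D_1 < m)$, so the two phases must be assembled with care to keep the dependence on $\delta$ and $m$ explicit throughout.
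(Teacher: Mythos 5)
Your outline is correct and is essentially the standard derivation of this result; note that the paper itself gives no proof, simply quoting the formula from Balakrishnan et al.\ \cite{balakrishnan2009exact}, and your argument --- condition on $D_1=d_1$ (which fixes $\delta$), represent the sum of right-truncated exponential order statistics as an alternating binomial combination of shifted gamma densities so that the factor $(1-e^{-\lambda t_1})^{d_1}$ from $P(D_1=d_1)$ cancels, and use the memoryless property under the CEM to treat the $n-d_1$ survivors as a fresh Type-I censored exponential sample with rate $\phi^{\delta}\lambda$ on the horizon $t_2-t_1$, the two phases being conditionally independent given $D_1$ --- is precisely the route underlying the cited result. The only remaining work is the bookkeeping you already flag, namely verifying that the shifts come out as $T_{d_1 j}=(n-d_1+j)t_1$ and $T_{d_1 d_2 k}=(n-d+k)(t_2-t_1)$ so as to match the exponential factors $p_1(\lambda)$ and $p_2(\lambda,\phi)$ in the statement.
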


\section{Expression of Bayes risk for quadratic loss function}\label{appb1}
Consider the quadratic loss function $h(\lambda)=a_0+a_1\lambda+a_2\lambda^2$. The joint distribution of $\lambda$ and $\phi$ is given by
\begin{align*}
    p(\lambda,\phi)=\frac{\beta^\alpha}{\Gamma(\alpha)(l-1)}\lambda^{\alpha-1}\exp(-\beta\lambda).
\end{align*}
The Bayes risk is then obtained as
\begin{align*}
    R_B(\boldsymbol{q},a)=n(C_s-v_s)+C_a n_{as}+E[D]+C_tE[\tau]+E_{\lambda}[h(\lambda)]+\sum_{d_1=0}^n\sum_{d_2=0}^{n-d_1}H(d_1,d_2),
\end{align*}
where $E_\lambda[h(\lambda]$, $H(d_1,d_2)$, $E[D]$, $n_{as}$ and $E[\tau]$ are as follows:
$$E_{\lambda}[h(\lambda)]=a_0+a_1\frac{\alpha}{\lambda}+a_2\frac{\alpha(\alpha+1)}{\beta^2}$$,
\scriptsize
\begin{align*}
    H(d_1,d_2)=\begin{dcases*}
        \frac{\beta^\alpha}{\Gamma(\alpha)(l-1)}\sum_{p=0}^2a_p~I(nt_1\leq c(0,0,n(t_2-t_1))~\&~n(t_2-t_1)\leq c(0,0))H_1(0,0,nt_1,n(t_2-t_1))& for $d_1=0,d_2=0$\\
        \frac{\beta^\alpha}{\Gamma(\alpha)(l-1)}\sum_{k=0}^{d_2}\int_{w_2=T_{0,d_2,k}}^{\xi(0,d_2,k)}~\sum_{p=0}^2a_pI(nt_1\leq c(0,d_2,w_2))\frac{(w_2-T_{0,d_2,k})^{d_2-1}}{\Gamma(d_2)} ~H_1(nt_1,w_2,0,d_2,p)~dw_2 & for $d_1=0,d_2>0$\\
\frac{\beta^\alpha}{\Gamma(\alpha)(l-1)}\sum_{j=0}^{d_1}\int\limits_{w_1=T_{d_1,j}}^{\xi(d_1,0,k)}\sum_{p=0}^2a_p~I((n-d_1)(t_2-t_1)\leq c(d_1,0))\frac{(w_1-T_{d_1j})^{d_1-1}}{\Gamma(d_1)}H_1(w_1,n(t_2-t_1),d_1,0,p)~ dw_1 & for $d_1>0,d_2=0$\\
        \frac{\beta^\alpha}{\Gamma(\alpha)(l-1)}\sum_{j=0}^{d_1}\sum_{k=0}^{d_2}\int\limits_{w_2=T_{d_1,d_2,k}}^{\xi(d_1,d_2,k)}\int\limits_{w_1=T_{d_1,j}}^{\xi(d_1,d_2,w_2,j)}\sum_{p=0}^2a_p\frac{(w_1-T_{d_1,j})^{d_1-1}(w_2-T_{d_1,d_2,k})^{d_2-1}}{\Gamma(d_1)\Gamma(d_2)}H_1(w_1,w_2,d_1,d_2,p)~dw_1~dw_2  & for $d_1>0,d_2>0$.
    \end{dcases*}
\end{align*}
\begin{align*}
    E[D]=\begin{dcases*}
        n-n\left(\frac{\beta}{\beta+t_2}\right)^\alpha &for $m=0$\\
        n+\sum_{d_1=0}^{m-1}(n-d_1)\binom{n}{d_1}\sum_{i=0}^{d_1}\binom{d_1}{i}(-1)^{d_1-i+1}\left[\frac{1}{(l-1)(t_2-t_1)(\alpha-1)}\left\{\frac{\beta^\alpha}{[\beta+(n-i)t_1+(t_2-t_1)]^{\alpha-1}}\right.\right.\\
      \left.\left. -\frac{\beta^\alpha}{[\beta+(n-i)t_1+l(t_2-t_1)]^{\alpha-1}}\right\}\right]  +\sum_{d_1=m}^{n-1}(n-d_1)\binom{n}{d_1}\sum_{i=0}^{d_1}\binom{d_1}{i}(-1)^{d_1-i+1}\left[\left(\frac{\beta}{\beta+(n-i)t_1+(t_2-t_1)}\right)^\alpha\right], &for  $1\leq m\leq n-1$\\
    n-\sum_{d_1=0}^{n-1}(n-d_1)\binom{n}{d_1}\sum_{i=0}^{d_1}\binom{d_1}{i}\left[\frac{(-1)^{d_1-i+1}}{(l-1)(t_2-t_1)(\alpha-1)}\left\{\frac{\beta^\alpha}{[\beta+(n-i)t_1+(t_2-t_1)]^{\alpha-1}}-\frac{\beta^\alpha}{[\beta+(n-i)t_1+l(t_2-t_1)]^{\alpha-1}}\right\}\right],~~ &for   $m=n$,
    \end{dcases*}
\end{align*}
\normalsize
\begin{align*}
 n_{as}=\begin{dcases*}
     0& for $m=0$\\
      \sum_{d_1=0}^{m-1}\sum_{i=0}^{d_1}(n-d_1)\binom{n}{d_1}\binom{d_1}{i}(-1)^i\left(\frac{\beta}{(n-i)t_1+\beta}\right)^{\alpha} &for $m> 0$ 
 \end{dcases*}   
\end{align*}
   \begin{align*}
       E[\tau]=\begin{dcases*}
           \sum_{i=1}^{n}\binom{n}{i}(-1)^{i}\left[\frac{\beta}{i(\alpha-1)}-\frac{\beta^\alpha}{i(\alpha-1)(\beta+it_2)^{\alpha-1}}\right] &for $m=0$\\
          \int_{\boldsymbol{\theta}}\int_{t=0}^{t_1}R_{Z_n}(t)~dt~d\boldsymbol{\theta} +   \int_{\boldsymbol{\theta}}\int_{t=0}^{t_2}R_{Z_n}(t)~dt~d\boldsymbol{\theta} & for $m> 0$,
       \end{dcases*}
   \end{align*}
   where
\begin{align*}
  \int_{\boldsymbol{\theta}}\int_{t=0}^{t_1}R_{Z_n}(t)~dt~d\boldsymbol{\theta} =&\int_0^\infty\int_0^{t_1} \sum_{i=1}^{n}\binom{n}{i}(-1)^{i+1}\exp(-i\lambda t)\frac{\beta^\alpha}{\Gamma(\alpha)}\lambda^{\alpha-1}\exp(-\beta\lambda)~dt~d\lambda\\
    =&\frac{\beta^\alpha}{\Gamma(\alpha)}\sum_{i=1}^{n}\binom{n}{i}(-1)^{i+1}\int_0^\infty\lambda^{\alpha-1}\exp(-\beta\lambda)\left[\left(\frac{1-\exp(-i\lambda t_1)}{i\lambda}\right)\right]~d\lambda\\
    =&\sum_{i=1}^{n}\binom{n}{i}(-1)^{i}\left[\frac{\beta}{i(\alpha-1)}-\frac{\beta^\alpha}{i(\alpha-1)(\beta+it_1)^{\alpha-1}}\right]
\end{align*}
and
\begin{align*}
  &\int_{\boldsymbol{\theta}}\int_{t=0}^{t_2}R_{Z_n}(t)~dt~d\boldsymbol{\theta}\\
  =& \sum_{d_1=0}^{n-1}\sum_{j=0}^{d_1}\sum_{k=1}^{n-d_1}\frac{A_{d_1n-d_1jk}}{m-1}\int_1^l\left[\frac{\beta^\alpha}{(\beta+(n-d_1+j)t_1)^\alpha\phi^{\delta}(\alpha-1)}-\frac{\beta^\alpha}{(\beta+(n-d_1+j)t_1+\phi^{\delta} k(t_2-t_1))^\alpha\phi^{\delta}(\alpha-1)}\right]~d\phi
\end{align*}

\end{document}